\documentclass[pra,reprint,twocolumn,superscriptaddress,floatfix,nofootinbib]{revtex4-2}
\pdfoutput=1

\usepackage[utf8]{inputenc}
\usepackage[T1]{fontenc}
\usepackage[colorlinks]{hyperref}
\usepackage[normalem]{ulem}
\usepackage{comment}

\usepackage{amsthm}
\usepackage{mathtools}
\usepackage{bbm}
\usepackage{bm}
\usepackage{dsfont}
\usepackage{graphicx}
\usepackage{longtable}
\usepackage{booktabs} 
\usepackage{algorithm}
\usepackage{algpseudocode}
\usepackage{amssymb}

\usepackage{physics}
\usepackage{tikz}
\usetikzlibrary{arrows}

\usepackage{thmtools}
\usepackage{thm-restate}

\newtheorem{theorem}{Theorem}
\newtheorem{lemma}[theorem]{Lemma}

\newtheorem{proposition}[theorem]{Proposition}
\theoremstyle{definition}

\newcommand{\id}{\mathds{1}}
\renewcommand{\H}{\mathcal{H}}
\renewcommand{\L}{\mathcal{L}}
\newcommand{\U}{\mathcal{U}}
\newcommand{\M}{\mathcal{M}}
\renewcommand{\O}{\mathcal{O}}

\newcommand{\C}{\mathcal{C}}

\newcommand{\K}{\mathcal{K}}

\renewcommand{\S}{\mathcal{S}}
\newcommand{\T}{\mathcal{T}}
\newcommand{\V}{\mathcal{V}}
\newcommand{\W}{\mathcal{W}}


\def\dket#1{\mathinner{|{#1}\rangle\!\rangle}}
\def\dketbra#1#2{\mathinner{|{#1}\rangle\!\rangle\!\langle\!\langle{#2}|}}


\begin{document}


\title{Quantum Query Complexity of Boolean Functions under Indefinite Causal Order}

\author{Alastair A.\ Abbott}
\affiliation{Univ.\ Grenoble Alpes, Inria, 38000 Grenoble, France}

\author{Mehdi Mhalla}
\affiliation{Univ.\ Grenoble Alpes, CNRS, Grenoble INP, LIG, 38000 Grenoble, France}

\author{Pierre Pocreau}
\affiliation{Univ.\ Grenoble Alpes, Inria, 38000 Grenoble, France}
\affiliation{Univ.\ Grenoble Alpes, CNRS, Grenoble INP, LIG, 38000 Grenoble, France}

\date{October 26, 2023}


\begin{abstract}
The standard model of quantum circuits assumes operations are applied in a fixed sequential ``causal'' order. In recent years, the possibility of relaxing this constraint to obtain causally indefinite computations has received significant attention. The quantum switch, for example, uses a quantum system to coherently control the order of operations. Several \textit{ad hoc} computational and information-theoretical advantages have been demonstrated, raising questions as to whether advantages can be obtained in a more unified complexity theoretic framework.
In this paper, we approach this problem by studying the query complexity of Boolean functions under general higher order quantum computations. To this end, we generalise the framework of query complexity from quantum circuits to quantum supermaps to compare different models on an equal footing. 
We show that the recently introduced class of quantum circuits with quantum control of causal order cannot lead to any reduction in query complexity, and that any potential advantage arising from causally indefinite supermaps can be bounded by the polynomial method, as is the case with quantum circuits. Nevertheless, we find some functions for which the minimum error with which they can be computed using two queries is strictly lower when exploiting causally indefinite supermaps.
\end{abstract}


\maketitle


\textit{Introduction.---}%
The query model of computation is a simplified framework that offers a way to evaluate the complexity of different computational problems and to probe and compare different computational frameworks. 
Notably, it has been used to prove various separations between classical and quantum computations, and to provide non-trivial lower bounds on some important computational problems, such as the optimally of Grover's algorithm or the complexity of graph problems~\cite{bennett_strengths_1997, grover_fast_1996,Durr_2006}. 
This model was thus far only studied in the context of quantum circuits, in which queries and unitary operations are performed in a fixed sequential order, implying an underlying fixed causal structure. 
Several recent works have proposed ways of computing beyond fixed causal structures using, for example, the quantum switch~\cite{chiribella_quantum_2013}, where a quantum control system can put different sequences of operations into superposition, leading to what is called an indefinite causal order. 
Such scenarios can be studied in the higher order framework of quantum supermaps~\cite{chiribella_quantum_2013} (or ``process matrices''~\cite{Oreshkov2012}) which describes the most general transformations taking some given input operations into a valid output operation in a consistent manner.
Although some quantum supermaps are compatible with a fixed causal order, and thus represent quantum circuits~\cite{chiribella_quantum_2008}, others, like the quantum switch, have indefinite causal order or, more formally, are \emph{causally nonseparable}~\cite{Oreshkov2012,oreshkov_causal_2016,Wechs_2019}.

Causal indefiniteness was shown to provide advantages over quantum circuits in several \emph{ad hoc} tasks in quantum information theory~\cite{PhysRevA.86.040301, PhysRevLett.113.250402, PhysRevLett.128.230503, PRXQuantum.2.010320, bavaresco2022unitary, quintino2022deterministic, PhysRevLett.124.190503}.
However, much less is known about its potential advantages in a general complexity theoretic sense except that its computational complexity is upper bound by PP (which includes both NP and BQP)~\cite{PhysRevA.96.052315}.

In this letter, we address this problem by proposing query complexity as a natural way of comparing the relative power of quantum circuits and causally indefinite supermaps. 
After recalling the standard query model of computation in terms of quantum circuits, we formulate three different families of quantum supermaps that can be understood as query computations with different causal structures.
We show that allowing for superpositions or coherent control of the order of queries in so-called \emph{quantum circuits with quantum control of causal order}~\cite{wechs_quantum_2021} cannot reduce the query complexity of Boolean functions compared to quantum circuits. 
Furthermore, through a generalisation of the polynomial method to general quantum supermaps, we give a lower bound on any possible reduction that could be obtained with causally indefinite supermaps. 
Then, using semidefinite programming techniques, we show that an advantage can be obtained with some such supermaps: for some functions, causal indefiniteness can reduce the minimum error with which they can be computed given a fixed number of queries.


\textit{The query model of computation.---}%
In the quantum query model of computation (see~\cite{BUHRMAN200221} or \cite{ambainis2018understanding} for a survey), the goal is to compute a Boolean function $f : \{0, 1\}^n \to \{0, 1\}$ with a quantum circuit having access to an oracle $O_x$.
Here without loss of generality we use the phase oracle~\cite{ambainis_quantum_2002,ambainis2018understanding}, defined as the unitary that takes as input a state $\ket{i}$ in an $(n+1)$-dimensional Hilbert space $\H^Q$, with $i$ serving as an index, $0 \leq i \leq n$ such that:
\begin{equation}
\label{eq:phase}
    O_x\ket{i} = \begin{cases} (-1)^{x_i}\ket{i} & \text{ if } i \not = 0, \\
                                            \ket{i} & \text{ otherwise.} \end{cases}
\end{equation}
Note that the special case $\ket{i} = \ket{0}$ is needed to ensure $O_x$ is equivalent to the more common form oracle $O'_x\ket{j, b} = \ket{j, b \oplus x_j}$, with $1\leq j\leq n$, $b \in \{0, 1\}$ and $\oplus$ denoting addition modulo 2.
A quantum circuit with $T$ queries to the oracle is specified by a series of $T+1$ unitaries $(U_1, \dots, U_{T+1})$ each acting on the space $\H^{Q\alpha} = \H^Q \otimes \H^\alpha$, with $\H^\alpha$ an ancillary space of arbitrary dimension. 
By extending the oracle on the ancillary space, $\tilde{O}_x = O_x \otimes \id^\alpha$, we say the circuit computes $f$ (exactly) if for all $x$, measuring the left most qubit of $U_{T+1} \tilde{O}_x U_{T} \cdots U_1 \tilde{O}_x U_1 \ket{0 \cdots 0}$ gives the outcome $f(x)$ with probability one. 
We similarly say that the circuit computes $f$ with a bounded error $\varepsilon$ if the measurement result gives, for all $x$, $f(x)$ with a probability at least $1-\varepsilon$ with some fixed $\varepsilon < 1/2$; typically one takes $\varepsilon=1/3$. 
The exact quantum query complexity of a Boolean function, written $Q_E(f)$, is the minimal value of $T$ for which there exists a quantum query circuit computing $f$ exactly, while $Q_2(f)$ is the (two-sided) bounded error quantum query complexity, defined as the minimal $T$ for which a quantum query circuit exists computing $f$ with bounded error $\varepsilon = 1/3$. 

This definition of quantum query complexity is the usual one considered in the literature, and is  based on the standard formalism of quantum circuits. In this letter, we show that translating this definition into the framework of quantum supermaps leads to a natural generalisation of the notion of quantum query complexity to study the relative power of causal indefiniteness. Let us first introduce the tools and notation we will use throughout this letter.


\textit{Mathematical tools and notation.---}%
We denote the space of linear operators on a Hilbert space $\H^X$ as $\L(\H^X)$.
To define quantum supermaps we first introduce \emph{quantum channels}, which are completely positive (CP) trace-preserving (TP) maps $\M : \L(\H^X) \to \L(\H^Y)$.
A channel is conveniently represented as a positive semidefinite matrix $\mathsf{M}^{XY} \in \L(\H^{XY})$ using the Choi isomorphism~\cite{CHOI1975285,JAMIOLKOWSKI1972275} (see Appendix~\ref{appendix:Choi}), where we adopt the shorthand $\H^{XY}=\H^X\otimes\H^Y$. 
The composition of two CP maps, potentially over a subset of their input/output systems, can be computed directly via their Choi matrices using the ``link product''~\cite{chiribella_theoretical_2009, chiribella_quantum_2008}, denoted `$*$' and defined for any matrices $\mathsf{M}^{XY}\in\L(\H^{XY})$ and $\mathsf{N}^{YZ} \in \L(\H^{YZ})$ as $\mathsf{M}^{XY} * \mathsf{N}^{YZ} = \Tr_{Y}[(\mathsf{M}^{XY} \otimes \id^Z)^{\mathsf{T}_Y} (\id^X \otimes \mathsf{N}^{YZ})] \in \L(\H^{XZ})$, where $\Tr_Y$ is the partial trace over the subsystem $\H^Y$ and $\cdot^{\mathsf{T}_Y}$ the partial transpose over $\H^Y$. 
The link product is commutative and associative and reduces to the tensor product on maps acting on disjoint Hilbert spaces (see Appendix~\ref{appendix:link_prod}). 

A $T$-input \emph{quantum supermap} $\mathcal{S}$ is then a $T$-linear, completely CP-preserving, and TP-preserving map~\cite{chiribella_transforming_2008}.
That is, a supermap is the most general transformation that transforms any $T$ input channels $\M_t: \L(\H^{A_k^I}) \to \L(\H^{A_k^O})$ into another quantum channel: $\mathcal{S}(\M_1, \dots, \M_T) : \L(\H^P) \to \L(\H^F)$.
Writing $\H^{A^{IO}_k}=\H^{A^{I}_kA^O_k}$ and, with $\T := \{1, \dots, T\}$, $\H^{A^{IO}_\T} = \bigotimes_{k \in \T} \H^{A^{IO}_k}$, a supermap can be represented in the Choi picture as a ``process matrix''~\cite{Oreshkov2012}, a positive semidefinite matrix $W \in \L(\H^{P A^{IO}_\T F})$ belonging to a specific subspace $\L^{\text{Gen}}$ (see Appendix~\ref{annex:Gen}) and normalised such that $\Tr W = d^P \prod_{k=1}^T d^O_k$, with $d^P = \dim(\H^P)$ and $d^O_k = \dim(\H^{A^O_k})$~\cite{araujo_witnessing_2015,Wechs_2019}. 
We denote the set of all such process matrices $\W^{\text{Gen}}$.
The process matrix $W$ fully characterises $\mathcal{S}$, and the Choi matrix of the channel resulting from applying $\mathcal{S}$ to some input channels $(\mathcal{M}_1, \dots, \mathcal{M}_T)$ is obtained as $\mathsf{S}(\mathsf{M}_1, \dots, \mathsf{M}_T) = (\mathsf{M}_1 \otimes \dots \otimes \mathsf{M}_T) * W \in \L(\H^{PF})$.

\emph{Fixed order and quantum controlled supermaps.---}%
The general framework of quantum supermaps allows us to compare, on an equal footing, specific classes of quantum supermaps of particular interest, and to understand their computational capabilities.
Here we present two sub-classes of particular interest: supermaps with a fixed causal order (sometimes termed quantum combs \cite{chiribella_quantum_2008} and which are equivalent to quantum circuits), and quantum supermaps with quantum control of causal order~\cite{wechs_quantum_2021}.

The class of supermaps compatible with a fixed causal order (or \emph{FO-supermaps}) has been extensively studied.
Any such $T$-input supermap $\mathcal{S}$ can de described by $T+1$ channels $\V_1 : \L(\H^P) \to \L(\H^{A^I_{1} \alpha_1})$, $\V_t : \L(\H^{A^O_t \alpha_t}) \to \L(\H^{A^I_{t+1} \alpha_{t+1}})$ for $2 \leq t \leq T$, and $\V_{T+1} : \L(\H^{A^O_{T} \alpha_T}) \to \L(\H^F)$ composed sequentially in a circuit structure~\cite{chiribella_quantum_2008}.
Here, the $\H^{\alpha_t}$ are an ancillary spaces, and $\H^P$ and $\H^F$ are the input and output spaces of the circuit.
In the Choi picture, the action of $\mathcal{S}$ on $T$ input channels $(\M_1, \dots, \M_T)$ is
\begin{align}
\label{eq:ActionFO}
    \mathsf{S}(\mathsf{M}_1, \dots, \mathsf{M}_T) =& \mathsf{V}_{T+1} * \mathsf{M}_T * \mathsf{V}_{T} * \dots * \mathsf{V}_2 *  \mathsf{M}_1 * \mathsf{V}_1 \notag \\
    =&(\mathsf{M}_1 \otimes \dots \otimes \mathsf{M}_T) * W \in \L(\H^{PF}),
\end{align}
where $W = \mathsf{V}_{T+1} * \dots * \mathsf{V}_1 \in \L(\H^{PA^{IO}_{\mathcal{T}}F})$ is the process matrix of $\mathcal{S}$. Process matrices of this specific form---which can be decomposed as the link product of quantum channels---belong to a subset $\mathcal{W}^{\text{FO}}\subset\mathcal{W}^\text{Gen}$~\cite{chiribella_quantum_2008,chiribella_theoretical_2009, wechs_quantum_2021}  (see Appendix~\ref{appendix:FO}). 
It is easy to see that quantum query circuits, as defined above, can readily be represented as FO-supermaps.

Another family of supermaps, those with quantum control of causal order (or \emph{QC-supermaps}), can be formulated as generalised quantum circuits in which a quantum system controls the order of application of the different input operations~\cite{wechs_quantum_2021}.
This clear interpretation and the potential realizability of QC-supermaps~\cite{wechs_quantum_2021,purves21,salzger23} makes it a class of particular interest, and which includes supermaps such as the quantum switch that are causally indefinite~\cite{chiribella_quantum_2013}.

A QC-supermap alternates between applying some controlled ``internal'' operations on a ``target'' system and some ancillary systems, potentially correlating these systems with the quantum control, and using this control system to control which input operations to apply to the target system at time-step $t$.
More precisely, throughout the computation the control system, at time-step $t$, has basis states $\ket{\K_{t-1},k_t}$ specifying that operation $k_t\notin \K_{t-1}$ will be applied next and that the operations $\K_{t-1}\subsetneq\T$ have already been applied.
This system controls coherently the application of the different input operations $\mathcal{M}_{k_t}$ at time-step $t$ and the pure internal operations $\V_{\K_{t-1},k_t}^{\to k_{t+1}} : \L(H^{A^O_{k_t}\alpha_t}) \to \L(H^{A^I_{k_{t+1}}\alpha_{t+1}})$ between time-steps, and is the minimal control system required to ensure that no operation is applied more than once, while maintaining the possibility for superpositions of causal orders and interference of different causal histories (i.e., permutations of the operations in $\K_{t-1}$); see Appendix~\ref{appendix:QC} or~\cite{wechs_quantum_2021} for a more detailed description of QC-supermaps.
Process matrices of QC-supermaps belong to a subset $\mathcal{W}^{\text{QC}}\subset \mathcal{W}^\text{Gen}$~\cite{wechs_quantum_2021}, whose characterisation is also given in Appendix~\ref{appendix:QC}. 


\textit{Quantum query complexity under indefinite causal order.---}%
We can now define what it means for a quantum supermap to compute a Boolean function $f : \{0, 1\}^n \to \{0, 1\}$,
in order to compare the relative computational power of the different types of supermaps. 

Let us consider a $T$-input supermap $\mathcal{S}^{\mathcal{C}}$ of class $\mathcal{C} \in \{\text{FO, QC, Gen} \}$, with trivial input space $\dim(\H^P) = 1$ and output space of dimension $\dim(\H^F) = 2$. 
These dimensional constraints reflect the fact that, as in the initial definition of quantum query complexity, any fixed input to the supermap can be absorbed directly into it, and only a single output qubit---that will be measured to define the output of the computation---is needed.

To differentiate the different input and output spaces of each query, we label (without imposing a particular order) the $T$  queries to the phase oracle~\eqref{eq:phase} as $\O_x^{(1)},\dots,\O_x^{(T)}$, with $\O_x^{(i)}: \L(\H^{A^I_i}) \to \L(\H^{A^O_i})$, where $\H^{A^I_i}$ and $\H^{A^O_i}$ are of dimension $n+1$.
(Here $\mathcal{O}_x$ is the quantum channel corresponding to the unitary oracle $O_x$.) 
Then for every $x$, the output of $\mathcal{S}^{\mathcal{C}}$ is the qubit state $\mathcal{S}^{\mathcal{C}}(\O_x^{(1)}, \dots, \O_x^{(T)}) \in \L(\H^F)$.
The probability that one obtains $f(x)$ when measuring this qubit in the computational basis is then given by the Born rule as
\begin{align}
\label{eq:computation}
    p\big(f(x)\big) &= \Tr[\mathsf{S}^{\mathcal{C}}(\mathsf{O}_x^{(1)}, \dots, \mathsf{O}_x^{(T)}) \cdot \Pi_{f(x)}] \\
	&= \Tr[(\mathsf{O}_x^{\otimes T} * W^{\mathcal{C}})\cdot \Pi_{f(x)}]\\
    &= \Tr[(\mathsf{O}_x^{\otimes T} \otimes \Pi_{f(x)})\cdot W^{\mathcal{C}}],
\end{align}
with $\Pi_{f(x)} = \ketbra{f(x)}{f(x)}$ and $\mathsf{O}_x^{\otimes T} = \bigotimes_{i=1}^T \mathsf{O}_x^{(i)}$.  
We say that $\mathcal{S}^\mathcal{C}$ computes $f$ with a bounded error $\varepsilon$ if for all $x$, $p(f(x)) \geq 1- \varepsilon$.
We can now define $Q_E^{\mathcal{C}}(f)$ as the minimum value $T$ for which there exists a quantum supermap $\mathcal{S}^{\mathcal{C}}$ computing $f$ exactly (i.e., with error $\varepsilon=0$).
Similarly, the corresponding bounded-error complexity $Q_2^{\mathcal{C}}(f)$ is the minimum $T$ for which there exists a quantum supermap $\mathcal{S}^{\mathcal{C}}$ computing $f$ with bounded error $\varepsilon = 1/3$.
By definition we have the following inequalities:
\begin{align}
     Q^\text{Gen}_E(f) &\leq Q^\text{QC}_E(f) \leq Q^\text{FO}_E(f) = Q_E(f), \label{ineq:exact} \\
     Q^\text{Gen}_2(f) &\leq Q^\text{QC}_2(f) \leq Q^\text{FO}_2(f) = Q_2(f). \label{ineq:approx}
\end{align}
To compare the power of the different classes of quantum supermaps in terms of query complexity, we need to understand how these inequalities can be refined.


\textit{Quantum control over $T$ copies of the same unitary channel.---}%
QC-supermaps are known to provide advantages over FO-supermaps in several tasks, some of which are sometimes expressed as query complexity advantages (albeit of a different kind than studied here), such as in the discrimination between pairs of commuting and anti-commuting unitaries~\cite{PhysRevA.86.040301} and generalisations of that task~\cite{PhysRevLett.113.250402,PRXQuantum.2.010320}. 
Here, we show that they cannot provide an advantage in query complexity of Boolean functions. In particular, building on a result of Ref.~\cite{bavaresco2022unitary} for ``swich-like'' supermaps, as strict subclass of QC-supermaps generalising the quantum switch, we show that QC-supermaps are equivalent to FO-supermaps when acting on $T$ copies of the same unitary channel.

\begin{restatable}{thm}{QC}
\label{th:QC-S}
 For any $T$-input QC-supermap $\S^\textup{QC}$, there exists an FO-supermap $\S^\textup{FO}$ such that $\S^\textup{QC}$ and $\S^\textup{FO}$ have the same action whenever applied to $T$ copies of the same unitary channel. That is, for all unitary channels $\mathcal{U}$, $\S^\textup{QC}(\mathcal{U},\dots,\mathcal{U})=\S^\textup{FO}(\mathcal{U},\dots,\mathcal{U})$.
\end{restatable}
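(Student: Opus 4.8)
The plan is to work entirely in the pure picture, which is available because we only ever feed the supermap unitary channels (and QC-supermaps map unitaries to unitaries), and to exploit the key simplification that identical inputs trivialise the coherent control over \emph{which} operation is applied at each time-step. Recall from the definition of QC-supermaps that the control system, at time-step $t$, carries basis states $\ket{\K_{t-1},k_t}$ and coherently controls the application of $\mathcal{M}_{k_t}$ to the target, followed by the internal unitary $\V_{\K_{t-1},k_t}^{\to k_{t+1}}$. When all inputs equal the same unitary channel $\U$, the controlled application $\sum_{k_t}\ketbra{\K_{t-1},k_t}{\K_{t-1},k_t}\otimes U$ no longer depends on the branch label $k_t$ on the input side: it acts as $U$ on the target in every branch. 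Thus the entire computation applies $U$ exactly $T$ times to the target, and the only remaining branch-dependence lives in the internal unitaries $\V$ and in the control register itself.

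First I would write out the global output unitary of $\S^\textup{QC}(\U,\dots,\U)$ on the target, ancillary, and output-control spaces as a sum over complete causal histories, using the step-wise description of Appendix~\ref{appendix:QC}. After substituting the identical inputs and using the observation above, each history contributes a term in which $U$ appears in the same $T$ positions, separated only by the internal unitaries appropriate to that history. Next I would construct a candidate FO-supermap by fixing the trivial order $1,2,\dots,T$: its channels $\V^\textup{FO}_t$ route the target through the queries in this fixed order while carrying the QC control register as part of the ancillary space $\H^{\alpha_t}$, applying within the ancilla the same control-conditioned internal operations $\V$. Because $U$ is applied identically in both constructions, the evolution of the target-plus-ancilla state is term-by-term identical, so the resulting Choi matrices agree. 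Here I would invoke the result of Ref.~\cite{bavaresco2022unitary} for switch-like supermaps and extend its argument to the dynamically branching control of general QC-supermaps. Finally I would check that the $\V^\textup{FO}_t$ are genuine CPTP channels composing sequentially, so that $\S^\textup{FO}$ is a bona fide FO-supermap with process matrix in $\W^\textup{FO}$.

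The main obstacle is the coherent interference between distinct causal histories, which is precisely what distinguishes QC- from FO-supermaps. One must show that this interference does not survive in a form inaccessible to a fixed order: the crucial point is that with identical inputs the \emph{routing} of the target through the different query slots becomes physically immaterial, since $U$ acts the same way regardless of which slot $A^{IO}_k$ it occupies, so the superposition of orders can be collapsed onto a single fixed routing while faithfully retaining the control register as an internal ancilla. Making this routing equivalence precise—and verifying that absorbing the branching control system into the FO ancilla yields trace-preserving internal channels that compose in the fixed order—is the technical heart of the argument.
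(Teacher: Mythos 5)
Your proposal is correct and follows essentially the same route as the paper's proof: the key observation that identical unitary inputs make the coherently controlled query factorise as $\tilde{U}^{(t)}\otimes\id$ on the control, so the control register can be absorbed into the FO ancilla and the internal operations reused verbatim for the fixed order $(1,\dots,T)$, is exactly the paper's argument (the paper carries it out self-containedly via the link product rather than by invoking Ref.~\cite{bavaresco2022unitary}). The "technical heart" you identify—the routing relabelling of the slots $\H^{A^{IO}_{k_t}}\to\H^{A^{IO}_t}$ and the trace-preservation of the resulting channels—is handled in the paper just as you sketch it.
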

The proof of Theorem~\ref{th:QC-S} is detailed in Appendix~\ref{appendix:equiv}. 
While the theorem may appear intuitive, the result exploits subtly both the specific structure of QC-supermaps and the unitarity of the operation. Indeed, the analogous result for non-unitary channels is known to be false, and concrete advantages are known in that setting~\cite{quintino19,PhysRevLett.127.200504,liu23,mothe23}.

A direct corollary of Theorem \ref{th:QC-S} is that $Q_E^\text{QC}=Q_E^\text{FO}$ and $Q_2^\text{QC}=Q_2^\text{FO}$.
Thus, no advantage in query complexity can be found using QC-supermaps instead of FO-supermaps, and for that matter, in any task where $T$ copies of the same unitary channel are considered, such as the reversal of unknown unitary transformations~\cite{quintino19} or unitary channel discrimination~\cite{bavaresco2022unitary}. 
In both these tasks, however, advantages were obtained using more general causally indefinite quantum supermaps beyond QC-supermaps.
This raises the prospect of nevertheless obtaining advantages in query complexity from causally indefinite supermaps.
In order to better target where to look for such an advantage, we first provide a lower bound on any potential reduction of query complexity of Boolean functions with general supermaps by generalising a well-studied bound for quantum circuits, the polynomial method.


\textit{Polynomial bound for general quantum supermaps.---}%
The polynomial method makes a connection between the output of an FO-supermap and a multivariate polynomial $g$, and is an important method for proving lower bounds on the quantum query complexity of Boolean functions~\cite{beals_quantum_2001}. 
A polynomial $g$ is said to represent a Boolean function $f$ if for all $x \in \{0, 1\}^n$, $f(x) = g(x)$. As an example, consider the polynomial $g(x_1, x_2) = x_1 + x_2 - x_1x_2$, which represents the Boolean $\text{OR}$ function. This polynomial has degree $2$, and we denote $\deg(f)$ the smallest degree of any polynomial representing $f$.
Similarly, we write $\widetilde{\deg}(f)$ the degree of the smallest polynomial approximating $f$ with a bounded error $\varepsilon=1/3$, i.e., such that for all $x$, $|g(x) - f(x)| \leq 1/3$. 
The polynomial method states that for all Boolean functions we have $\deg(f)/2 \leq Q^\text{FO}_E(f)$, and likewise for the bounded-error counterpart, that $\widetilde{\deg(f)}/2 \leq Q^\text{FO}_2(f)$. 
Here, we generalise this lower bound to general supermaps and thereby bound the potential advantage obtainable with causally indefinite supermaps over standard quantum circuits.
\begin{restatable}{thm}{boundProcess}
\label{th:bound_process}
  For any Boolean function $f$, we have $\deg(f)/2 \leq Q^\textup{Gen}_E(f)$ and $\widetilde{\deg}(f)/2 \leq Q^\textup{Gen}_2(f)$.
\end{restatable}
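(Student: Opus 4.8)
The plan is to adapt the polynomial method, exploiting that in the supermap formalism the acceptance probability~\eqref{eq:computation} is \emph{linear} in the process matrix $W^{\mathcal{C}}$, which encodes the entire (possibly causally indefinite) causal structure and is itself independent of the input $x$. All the $x$-dependence is thus confined to the Choi matrix $\mathsf{O}_x^{\otimes T}$ of the oracles, and it suffices to control the polynomial degree of its entries.

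First I would fix the output projector to $\Pi_1=\ketbra{1}{1}$ and define
\begin{equation}
  g(x)=\Tr\!\big[(\mathsf{O}_x^{\otimes T}\otimes\Pi_1)\cdot W^{\textup{Gen}}\big],
\end{equation}
the probability that the supermap outputs $1$. Since $W^{\textup{Gen}}$ and $\Pi_1$ do not depend on $x$, and since the trace of a product of Hermitian matrices is real, $g$ is a fixed real linear combination of the entries of $\mathsf{O}_x^{\otimes T}$. Next I would compute the Choi matrix of the diagonal phase oracle $O_x=\sum_{i=0}^n(-1)^{x_i}\ketbra{i}{i}$ (with the convention $x_0=0$): it is $\mathsf{O}_x=\dketbra{O_x}{O_x}$, with entries $(\mathsf{O}_x)_{ii,jj}=(-1)^{x_i+x_j}=(1-2x_i)(1-2x_j)$. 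Each such entry is a multilinear polynomial of degree at most $2$ in $x_1,\dots,x_n$, so every entry of $\mathsf{O}_x^{\otimes T}$ is a product of $T$ of them and hence has degree at most $2T$; consequently $g$ is a polynomial of degree at most $2T$.

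Finally I would invoke the computation condition. For exact computation, $p(f(x))=1$ for all $x$ forces $g(x)=f(x)$ on the whole Boolean cube, so $g$ represents $f$ and $\deg(f)\le\deg(g)\le 2T$; taking $T=Q^{\textup{Gen}}_E(f)$ gives $\deg(f)/2\le Q^{\textup{Gen}}_E(f)$. For the bounded-error case, $p(f(x))\ge 2/3$ translates into $|g(x)-f(x)|\le 1/3$ for all $x$, so $g$ approximates $f$ and $\widetilde{\deg}(f)\le 2T$, yielding $\widetilde{\deg}(f)/2\le Q^{\textup{Gen}}_2(f)$.

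The step requiring the most care is the degree count for the oracle's Choi matrix: because the Choi representation bundles together the oracle and its adjoint, a single query contributes degree $2$ here (one factor $(-1)^{x_i}$ from the ket and one from the bra) rather than the degree $1$ per query familiar from tracking amplitudes in the circuit picture---both accounts nonetheless agreeing that the acceptance probability has degree at most $2T$. The key conceptual point, and the reason the bound extends effortlessly beyond quantum circuits, is that this degree bound is entirely insensitive to the structure of $W^{\textup{Gen}}$: linearity in the process matrix means the argument applies verbatim to arbitrary, possibly causally nonseparable, supermaps.
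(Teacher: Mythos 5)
Your proof is correct and follows essentially the same route as the paper: confine all $x$-dependence to the oracle Choi matrix, bound the degree of its entries by $2T$, and use linearity of the trace/link product in $W^{\textup{Gen}}$ to conclude that the acceptance probability is a degree-$2T$ polynomial. The only cosmetic difference is that the paper establishes the degree bound by induction on the Choi \emph{vector} $\dket{O_x^{\otimes T}}$ (degree $T$) and then squares, whereas you compute the entries $(1-2x_i)(1-2x_j)$ of a single oracle's Choi matrix directly; both yield the same bound.
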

The proof is similar to that of the original result; we provide the full details in Appendix~\ref{appendix:TheoremPol} and simply outline the argument here. 
Supposing that $Q_E^{\text{Gen}}(f)=T$, there is a $T$-input supermap $\S^{\text{Gen}}$ with process matrix $W^{\text{Gen}}$ such that $f(x)=\Tr\big[(\mathsf{O}_x^{\otimes T} * W^{\text{Gen}}) \cdot \Pi_1\big]$.
Proceeding by induction, one finds that the Choi matrix of the $T$ queries, $\mathsf{O}_x^{\otimes T}$, has elements that are multivariate polynomials of degree at most $2T$.
Since both the trace and link product are linear, it immediately follows that $f(x)$ is also a multivariate polynomial of degree at most $2T$, completing the proof.
An analogous proof gives us the bounded error version of the statement.

These results mean that for functions whose polynomial bound is tight for FO-supermaps, i.e., when $\deg(f)/2 =  Q^{\text{FO}}_E(f)$ or $\widetilde{\deg}(f)/2 =  Q^{\text{FO}}_2(f)$, causal indefiniteness cannot provide any advantage.
This is the case, for example, of the OR function~\cite{beals_quantum_2001} (which is computed with a bounded error by Grover's algorithm). 
Theorems~\ref{th:QC-S} and~\ref{th:bound_process} thus allow us to refine the inequalities~\eqref{ineq:exact} and~\eqref{ineq:approx} as follows:
\begin{align}
     \frac{\deg(f)}{2} &\leq Q^\text{Gen}_E(f) \leq Q^\text{QC}_E(f) = Q^\text{FO}_E(f) = Q_E(f) \\
    \frac{\widetilde{\deg}(f)}{2} &\leq Q^\text{Gen}_2(f) \leq Q^\text{QC}_2(f) = Q^\text{FO}_2(f) = Q_2(f).
\end{align}
Still, it is known that some Boolean functions do not have a tight polynomial bound \cite{AMBAINIS2006220}, meaning that Theorem~\ref{th:bound_process} does not rule out a potential advantage from causal indefiniteness using general supermaps. 
This motivates us to study explicit Boolean functions to look for such an advantage.
We note that some bounds can be immediately put on the extent of any possible separation: since $Q_E(f)=\tilde{O}(\deg(f)^3)$ and $Q_2(f)=\tilde{O}(\widetilde{\deg}(f)^4)$~\cite{aaronson16,aaronson2021degree}, one cannot hope for an exponential advantage from causal indefiniteness.\footnote{A function $f(x)=\tilde{O}(g(x))$ if $f(x)=O(g(x)\log^k x)$ for some constant $k$.}

In general, we lack an understanding of general supermaps beyond QC-supermaps, so trying to develop analytically supermaps providing such an advantage for general $n$-bit Boolean functions is extremely challenging.
Instead, we study exhaustively Boolean functions up to $4$ bits, building on the extensive literature on the optimisation of quantum supermaps using semidefinite programs (SDPs)~\cite{araujo_witnessing_2015,Wechs_2019}.


\textit{Advantage of general supermaps over FO-supermaps.---}%
In the remainder of this letter, we show an advantage of causally indefinite supermaps over fixed order ones by considering the minimum bounded error computation of a function $f$. 
Let us denote by $\varepsilon^\text{FO}_T(f)$ (resp.\ $\varepsilon^\text{Gen}_T(f)$) the minimum error $\varepsilon$ for which there exists a $T$-query FO-supermap (resp.\ a general supermap) computing $f$ with a bounded error $\varepsilon$. We prove the following theorem for $T=2$.
\begin{restatable}{thm}{gap}
\label{theorem:gap}
There exists some function $f$ for which $\varepsilon^\textup{Gen}_2(f) < \varepsilon^\textup{FO}_2(f)$.
\end{restatable}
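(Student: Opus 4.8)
The plan is to prove Theorem~\ref{theorem:gap} constructively by exhibiting a single explicit Boolean function $f$ together with numerical certificates showing a strict gap between $\varepsilon^\textup{Gen}_2(f)$ and $\varepsilon^\textup{FO}_2(f)$. The key observation is that for a \emph{fixed} number of queries $T=2$, each of the two quantities $\varepsilon^\textup{FO}_2(f)$ and $\varepsilon^\textup{Gen}_2(f)$ is the optimum of a semidefinite program: using the Born-rule expression~\eqref{eq:computation}, the constraint that $p(f(x))\geq 1-\varepsilon$ for all $x\in\{0,1\}^n$ is linear in the process matrix $W^{\mathcal{C}}$, and the set $\W^{\text{FO}}$ (resp.\ $\W^{\text{Gen}}$) is characterised by positive-semidefiniteness together with the affine projection constraints onto $\L^{\text{FO}}$ (resp.\ $\L^{\text{Gen}}$) recalled in Appendices~\ref{appendix:FO} and~\ref{annex:Gen}. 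Minimising $\varepsilon$ subject to these constraints is therefore an SDP in each case, and since $\W^{\text{FO}}\subset\W^{\text{Gen}}$ we always have $\varepsilon^\textup{Gen}_2(f)\leq\varepsilon^\textup{FO}_2(f)$; the whole content of the theorem is to find \emph{one} $f$ for which the inequality is strict.

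First I would restrict attention to Boolean functions on $n\leq 4$ bits, since Theorem~\ref{th:bound_process} guarantees that any separation requires a function whose polynomial bound is not tight, and such functions are known to exist only once $n$ is large enough; four bits is the smallest regime where the exhaustive search is both feasible and promising. I would enumerate representatives of all Boolean functions on up to four bits (up to the symmetries that leave the query complexity invariant, e.g.\ permutations and negations of inputs and output, to keep the search tractable), and for each candidate solve the two SDPs above to high numerical precision. The goal of this sweep is to identify a concrete $f$ where the general-supermap optimum $\varepsilon^\textup{Gen}_2(f)$ comes out strictly below the fixed-order optimum $\varepsilon^\textup{FO}_2(f)$.

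Producing numbers from a solver is not a proof, so the essential second ingredient is to convert the numerics into a rigorous certificate. For the general-supermap side, the solution $W^{\text{Gen}}$ returned by the primal SDP directly yields a \emph{feasible} process matrix achieving some error $\varepsilon_g$; by checking (or lightly post-processing the solver output so that) $W^{\text{Gen}}\succeq 0$ and the linear projection constraints hold exactly, one obtains a rigorous upper bound $\varepsilon^\textup{Gen}_2(f)\leq\varepsilon_g$. For the fixed-order side one needs a rigorous \emph{lower} bound $\varepsilon^\textup{FO}_2(f)\geq\varepsilon_f$ with $\varepsilon_g<\varepsilon_f$; this is obtained from a feasible solution of the \emph{dual} SDP, whose objective value lower-bounds the primal minimum by weak duality. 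Exhibiting explicit primal (general) and dual (fixed-order) feasible points with a clear numerical margin between them constitutes a complete proof, and the whole argument then reduces to verifying finitely many linear and positive-semidefiniteness conditions.

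The main obstacle I anticipate is precisely this rigour step rather than the search itself: SDP solvers return solutions satisfying the constraints only up to small numerical tolerances, so one must either rationalise the certificates (rounding $W^{\text{Gen}}$ and the dual variables to exact, e.g.\ rational, matrices and re-verifying feasibility by hand) or carry enough precision that the gap $\varepsilon_f-\varepsilon_g$ safely dominates the accumulated numerical error. A secondary difficulty is correctly encoding the affine characterisation of $\L^{\text{FO}}$ as SDP constraints, since an error there would spuriously shrink or enlarge $\W^{\text{FO}}$ and invalidate the separation; I would cross-check this by confirming that the FO-SDP reproduces the known query-complexity values on functions where the polynomial bound is tight, such as OR.
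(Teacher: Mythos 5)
Your proposal is correct and follows essentially the same route as the paper: an exhaustive SDP search over NPN representatives of Boolean functions on up to four bits, followed by a rigorous certification in which a rationalised primal-feasible general supermap gives an upper bound on $\varepsilon^\textup{Gen}_2(f)$ and a rationalised dual-feasible point gives a lower bound on $\varepsilon^\textup{FO}_2(f)$ (the paper carries this out for the function with ID 5865, obtaining $\varepsilon^\textup{Gen}_2(f)\leq 0.0377 < 0.0465 \leq \varepsilon^\textup{FO}_2(f)$). The only notable implementation difference is that the paper optimises over the superinstrument $\{W^{[0]},W^{[1]}\}$ rather than the full process matrix $W$ to keep the SDP tractable, which does not change the argument.
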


To prove Theorem~\ref{theorem:gap}, we use SDPs to look for such a gap on all Boolean functions up to 4 input bits. 
A systematic study for quantum circuits of the minimum bounded error computation of functions using SDPs was already performed in Ref.~\cite{Montanaro2015} for functions up to 4 bits, and for symmetric functions up to 6 bits. 
Their SDP formulation keeps track of the evolution of the state through the quantum circuit by defining $T$ Gram matrices, one for each additional query~\cite{QuerySDP}. 
Unfortunately this methods fails to extend to general supermaps as it exploits the sequential structure of a FO-supermap which has no analogue in generic quantum supermaps.\footnote{Indeed, for the same reason a more powerful method of bounding query complexity for quantum circuits, the adversarial method~\cite{ambainis_quantum_2002, hoyerNegativeWeightsMake2007a}, is not readily generalisable to generic supermaps, in contrast to the polynomial bound.}
Here, using the characterisation of supermaps in the Choi picture as process matrices $W$, we give an SDP formulation of the minimum error bound for both FO-supermaps and general ones.

Let us consider a Boolean function $f$ and a $T$-query supermap $\S^{\mathcal{C}}$ with $\mathcal{C}\in\{\text{FO},\text{Gen}\}$. 
Because we measure the output of the supermap to obtain the value of $f(x)$, it is convenient to consider, instead of the process matrix $W$, the quantum superinstrument $\{W^{[i]}\}_{i \in \{0, 1\}}$ such that $W^{[i]} = W * \Pi_{i}$~\cite{wechs_quantum_2021}, so that $p(f(x))=\Tr[W^{[f(x)]} \cdot \mathsf{O}_x^{\otimes T}]$. 
This has the advantage of reducing the size of the matrices being optimised numerically; for $T=2$ and $n=4$, $W$ is a matrix of size 1250, whereas the $W^{[i]}$ are of size 625, a difference crucial for rendering the SDPs we present below tractable.
Writing $F^{[i]} = \{x : f(x) = i\}$, the minimum error bound $\varepsilon^{\mathcal{C}}_T(f)$ is given by the SDP
\begin{equation}
\label{sdp}
\begin{split}
 \varepsilon^\mathcal{C}_T(f) &= \max_{\varepsilon, W^{[0]}, W^{[1]}} 1-\varepsilon \\
\text{s.t.}\ & \forall x \in F^{[0]}, \ \Tr[W^{[0]} \cdot \mathsf{O}_x^{\otimes T}] \geq 1-\varepsilon, \\
       & \forall x \in F^{[1]}, \ \Tr[W^{[1]} \cdot \mathsf{O}_x^{\otimes T}] \geq 1-\varepsilon, \\
       & W^{[0]} \geq 0,\ W^{[1]} \geq 0, \ \varepsilon \geq 0, \\
       & W^{[0]} + W^{[1]} \in \mathcal{W}^\mathcal{C}.
\end{split}
\end{equation}
Note that the size of the $W^{[i]}$'s scale as $(n+1)^{2T}$ making  the SDP difficult to solve for even moderate $n$ and $T$. 
We solved it numerically for 2 queries and for all Boolean functions up to 4 bits (including constant functions and functions depending on fewer bits). By exploiting the symmetries inherent in the SDP, one can further reduce the number of variables and constraints, but we were still unable to solve it for 3 queries or 5 bits. 

For 3 bits and 2 queries it is known that FO-supermaps are sufficient to exactly compute all functions except the AND function~\cite{Montanaro2015} which has an exact query complexity $Q^\text{FO}_E(\text{AND}) = n$. For this function, no advantage was found with 2-query general supermaps.

For 4-bit functions, as in Ref.~\cite{Montanaro2015}, we reduced the number of functions to look at by considering the so-called negate, permute, negate (NPN) equivalence relation. 
We say that two Boolean functions are equivalent if they are the same up to negation and permutation of the input bits, and negation of the output. 
These transformations correspond to a relabelling of the input and output bits, leaving unchanged the query complexity of a function.
 
The results are summarised in Table~\ref{table:results} of Appendix~\ref{app:numerical} where the ID of a function corresponds to its truth table converted into an integer, and the results of the SDPs for both FO- and general supermaps are rounded to the fifth decimal place. Our results for FO-supermaps coincide, as expected, with those of Ref.~\cite{Montanaro2015}.
Out of the 222 NPN representatives, we observe a gap between the minimum error bounds $\varepsilon^\textup{FO}_2(f)$ and  $\varepsilon^\textup{Gen}_2(f)$ for 179 functions, the maximum gap being 0.00947 (close to 1\%) for the functions with IDs 5783, 5865 and 6630.
The simplest of the three functions (ID 5865), can be written as the polynomial $f(x) = x_1 + x_2x_3 + x_2x_4 + x_3x_4 + x_2x_3x_4$.

This numerical evidence is not an analytical proof of Theorem~\ref{theorem:gap} as the constraints of the SDPs are only satisfied up to numerical precision. 
However, extracting an FO- or  general supermap that rigorously verifies the constraints is possible by rationalising the numerical results and perturbing the SDPs' solutions. Such a method was developed in Ref.~\cite{PhysRevLett.127.200504} and allows us to obtain upper bounds with the primal SDP~\eqref{sdp}, and lower bounds with the corresponding dual SDP which can be derived with Lagrangian method (see Appendix~\ref{Appendix:dual}). 
Using the extraction methods (detailed in Appendix~\ref{appendix:extraction}) for the function ID 5865, we find the bounds
\begin{equation}
    0.0324 \leq \varepsilon^\text{Gen}_2(f) \leq 0.0377 < 0.0465 \leq \varepsilon^\text{FO}_2(f) \leq 0.0467,
\end{equation}
which proves Theorem~\ref{theorem:gap}.


\textit{Discussion.---}%
By generalising the notion of query complexity to general supermaps, we provide a natural tool to probe, in a unified complexity theoretic framework and on equal footing, the power and the potential advantages that different types of causal structure can provide. 
Until now, it was unclear whether causal indefiniteness could provide any advantage in such a fundamental model of computation.
We found a separation between FO-supermaps and general, potentially causally indefinite, supermaps in the minimum error probability with which they can compute a Boolean function using two queries. 
While this separation does not directly translate into an asymptotic query complexity separation, it is a crucial first step in this direction.
One possible such approach would be to explore whether the separation we found can be amplified by recursively composing supermaps in a suitably well-defined manner~\cite{guerin19,kissinger19}, similar to the advantage in query complexity obtained with quantum circuits in Ref.~\cite{AMBAINIS2006220}. 
These results also raise the question of whether it is possible to find a separation in exact query complexity.

Our proof that QC-supermaps cannot provide any advantage over FO-supermaps when a single unitary is repeatedly queried has implications beyond query complexity. 
For instance, it also implies that this important class of supermaps cannot provide advantages in quantum metrology of unitary dynamics~\cite{giovannetti11,zhao}.
It likewise raises natural questions about the power of specific classes of supermaps beyond QC-supermaps, such as that of purifiable processes~\cite{Araujo2017purification}. 
The prospect that some such supermaps may be realizable~\cite{wechs23} leaves open the possibility of exploiting causal indefiniteness in the standard query complexity setting.
It would similarly be interesting to study whether supermaps, including QC-supermaps, can be exploited in related scenarios.
For instance, can advantages over quantum circuits be obtained when given multiple different oracles, and can our results be used to obtain advantages in quantum communication complexity in more standard settings than those studied with causal indefiniteness previously~\cite{guerin16}?


\begin{acknowledgments}
The authors acknowledge funding from the French National Research Agency projects ANR-15-IDEX-02 and ANR-22-CE47-0012, and the Plan France 2030 projects ANR-22-CMAS-0001 and ANR-22-PETQ-0007. 
For the purpose of open access, the authors have applied a CC-BY public copyright licence to any Author Accepted Manuscript (AAM) version arising from this submission.
\end{acknowledgments}


\bibliography{ICO_query_refs}


\clearpage

\onecolumngrid
\appendix


\section{Preliminaries}

\subsection{Choi isomorphism}
\label{appendix:Choi}

The Choi–Jamiołkowski isomorphism~\cite{CHOI1975285, JAMIOLKOWSKI1972275} (here we use the ``Choi version'') is an isomorphism between two different representations of CPTP maps. 
The Choi matrix of a linear map $\mathcal{M} : \L(\H^X) \to \L(\H^Y)$ is defined as
\begin{equation}
    \mathsf{M} =  \mathcal{I} \otimes \mathcal{M}(\dketbra{\id}{\id}^{XX})  = \sum_{i, i'} \ketbra{i}{i'} \otimes \M(\ketbra{i}{i'}) \in \L(\H^X \otimes \H^Y),
\end{equation}
where $\dket{\id}^{XX} = \sum_i \ket{i} \otimes \ket{i} \in \H^{XX}$ is the (unormalised) maximally entangled state, $\{\ket{i}\}_i$ the computational basis of $\H^X$, and $\mathcal{I}:\L(\H^X) \to \L(\H^X)$ the identity channel. 
Here, as throughout, we use the shorthand notation $\H^{XY}:=\H^X\otimes\H^Y$, etc.
Several important properties of a linear map $\M$ can be directly described through its Choi matrix $\mathsf{M}$.
In particular, $\M$ is completely positive (CP) if and only if $\mathsf{M}$ is positive semidefinite (i.e., $\mathsf{M}\geq 0$), and $\M$ is trace preserving (TP) if and only if $\Tr_Y\mathsf{M} = \id^X$. 
These two properties make the Choi isomorphism a useful tool for optimising over quantum channels in semidefinite programs (SDPs).
Finally, the action of $\M$ can be described in terms of its Choi matrix via the inverse Choi isomorphism as
\begin{equation}
\label{eq:actionChoi}
\M(\rho) =\Tr_X\big[(\rho^\mathsf{T} \otimes \id^Y) \mathsf{M}\big],
\end{equation}
where $\cdot ^\mathsf{T}$ is the transpose, and $\Tr_X$ denotes the partial trace over the system $X$.

It is often convenient to manipulate unitary (or isometric) channels at the level of their so-called Choi vectors. 
For an isometric channel $\U$, corresponding to the isometry $U:\H^X\to\H^Y$, its Choi vector is given as
\begin{equation}
    \dket{U} = \id^X \otimes U (\dket{\id}^{XX}) = \sum_i \ket{i} \otimes U\ket{i} \in \H^{XY},
\end{equation}
and its Choi matrix can be recovered as $\mathsf{U} = \dketbra{U}{U}$.


\subsection{Link product}
\label{appendix:link_prod}

The link product \cite{chiribella_quantum_2008, chiribella_theoretical_2009} is a useful tool allowing the composition of CP maps, potentially over a subset of their input/output systems, to be computed directly in the Choi picture.
It is defined, for any matrices $\mathsf{M}^{XY}\in\L(\H^{XY})$ and $\mathsf{N}^{YZ} \in \L(\H^{YZ})$, as
\begin{equation}
\label{eq:link}
    \mathsf{M}^{XY}*\mathsf{N}^{YZ} =\Tr_Y\big[(\mathsf{M}^{XY} \otimes \id^Z)^{\mathsf{T}_Y}(\id^X \otimes \mathsf{N}^{YZ})] \in \L(\H^{XZ}),
\end{equation}
where $\cdot^{\mathsf{T}_Y}$ denotes the partial transpose over the Hilbert space $\H^Y$ with respect to the computational basis. 
Because of the invariance of the trace under cyclic permutations, the link product is commutative, $\mathsf{M}^{XY} * \mathsf{N}^{YZ} = \mathsf{N}^{YZ} * \mathsf{M}^{XY}$, while in $n$-fold link products it is also associative as long as each Hilbert space appears at most twice in the product.
This will notably allow us to write unambiguously, for example, $\mathsf{M}_1 * \cdots * \mathsf{M}_T$, and is likewise why we differentiate, for instance, between the Hilbert spaces $\H^{\alpha_t}$ and $\H^{\alpha_{t'}}$ of the ancillary systems at different places in a quantum circuit (or QC-supermap), even if these spaces are isomorphic.

When $\H^Y$ is trivial (so that $\mathsf{M}^{XY}$ and $\mathsf{N}^{YZ}$ do not share any Hilbert spaces), the link product reduces to the tensor product as $\mathsf{M}^X * \mathsf{N}^Z = \mathsf{M}^X \otimes \mathsf{N}^Z$. 
If, on the other hand, $\H^X$ and $\H^Z$ are trivial (so that the operations act on the same Hilbert space $\H^Y$), then $\mathsf{M}^Y * \mathsf{N}^Y = \Tr[(\mathsf{M}^Y)^\mathsf{T} \mathsf{N}^Y]$.

A version of the link product can also be formulated for vectors, allowing the composition of unitaries (or, more generally, isometries) to be described directly on their Choi vectors~\cite{wechs_quantum_2021}.
The link  product between two vectors $\dket{U} \in \H^{XY}$ and $\dket{V} \in \H^{YZ}$ is given as
\begin{equation}
    \dket{U} * \dket{V} = (\dket{U}^{\mathsf{T}_Y} \otimes \id^Z) \dket{V} \in \H^{XZ}.
\end{equation}


\section{Characterisation of general supermaps}
\label{annex:Gen}

Recall that, as described in the main text, a general $T$-input quantum supermap $\mathcal{S}$ is a $T$-linear completely CP-preserving and TP-preserving map.
Such a supermap can be represented in the Choi picture by its process matrix $W \in\L(\H^{PA^{IO}_{\T}F})$~\cite{araujo_witnessing_2015,Oreshkov2012}. 
Let us write the set of all process matrices of general $T$-input quantum supermaps\footnote{For a given (in general, implicitly) scenario, i.e., specification of $T$ and the dimensions of the Hilbert spaces $\H^P,\H^{A_k^{I}},\H^{A_k^O},\H^F$.} as $\W^{\text{Gen}}$.

The process matrices of quantum supermaps are positive semidefinite matrices normalised to satisfy $\Tr W = d^O$ with $d^O=d^P \prod_{k=1}^T d^O_k$ (where $d^P = \dim(\H^P)$ and $d^O_k = \dim(\H^{A^O_k})$), and which belong to a specific linear subspace $\L^{\text{Gen}}$.
This subspace can be nicely characterised through linear constraints on $W^\text{Gen}$, allowing one to optimise over general supermaps in SDPs.
This characterisation can be nicely expressed using the ``trace-out-and-replace'' notation~\cite{araujo_witnessing_2015} defined as
\begin{equation}
    {}_X W = \frac{\id^X}{d_X} \otimes \Tr_X W \quad \text{and} \quad
    {}_{[1-X]} W = W - \frac{\id^X}{d_X} \otimes \Tr_X W,
\end{equation}
with $d_X = \text{dim}(\H^X)$.
Let us also define the ``reduced'' process matrices $W^{[P]}=\Tr_{A^{IO}_{\T}F}W\in\L(\H^P)$ and, for any nonempty subset $\K \subseteq \T$,
\begin{equation}
W^{[P \mathcal{K}]} = \Tr_{A^{IO}_{\mathcal{T} \setminus \mathcal{K}}F} W \in \L(\H^{PA^{IO}_\mathcal{K}}).
\end{equation}

\begin{proposition}[From \cite{araujo_witnessing_2015,wechs_quantum_2021}]
	\label{prop:GenCharachterisation}
	A matrix $W\in\L(\H^{PA^{IO}_{\T}F})$ is the process matrix of a $T$-input quantum supermap (i.e., $W\in \W^{\textup{Gen}}$) if and only if $W$ is positive semidefinite ($W\ge 0$), $\Tr W = d^O$, and $W\in \L^{\textup{Gen}}$, where the linear subspace $\L^{\textup{Gen}}$ is defined as
	\begin{align}
		W \in \L^{\textup{Gen}} \iff \forall \emptyset \subsetneq\mathcal{K} \subseteq \T,\ {}_{\Pi_{k \in \mathcal{K}} [1 - A^O_k]} W^{[P\mathcal{K}]} = 0 \text{ and } {}_{[1 - P]}W^{[P]} = 0.
	\end{align}
\end{proposition}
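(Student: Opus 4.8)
The plan is to establish the stated equivalence by treating the three conditions separately: positivity of $W$ will correspond to the ``completely CP-preserving'' requirement, while membership in $\L^{\textup{Gen}}$ together with the normalisation $\Tr W = d^O$ will encode ``TP-preserving''. The starting observation is that, by $T$-linearity and the Choi isomorphism applied in each input slot, the supermap $\S$ is equivalent to a single linear map $\mathsf{S}: \L(\H^{A^{IO}_\T}) \to \L(\H^{PF})$ on the tensor product of the input Choi spaces, related to the process matrix by $\mathsf{S}(\mathsf{M}_1\otimes\cdots\otimes\mathsf{M}_T) = (\mathsf{M}_1\otimes\cdots\otimes\mathsf{M}_T)*W$. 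A direct comparison of the link-product action formula with the inverse Choi isomorphism~\eqref{eq:actionChoi} shows that $W$ is precisely the Choi matrix of $\mathsf{S}$ (the partial transpose in the link product acts only on the contracted $A^{IO}_\T$ systems and exactly undoes the transpose in the Choi action formula), so that no spurious transpose obstructs the positivity argument.

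First I would prove that $W \ge 0$ is equivalent to $\S$ being completely CP-preserving. By definition the latter means that tensoring every input slot with the identity channel on an arbitrary reference system and feeding in arbitrary CP maps yields a CP output; this is exactly the statement that $\mathsf{S}$ is a completely positive linear map between the operator spaces $\L(\H^{A^{IO}_\T})$ and $\L(\H^{PF})$. By Choi's theorem, complete positivity of $\mathsf{S}$ is equivalent to positivity of its Choi matrix, i.e.\ to $W \ge 0$. Concretely, the ``only if'' direction is obtained by feeding the (unnormalised) maximally entangled Choi states into the reference-extended slots, which reproduces $W$ itself as an output operator that must therefore be positive semidefinite.

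Next I would translate the ``TP-preserving'' requirement. The output $\mathsf{S}(\mathsf{M}_1,\dots,\mathsf{M}_T)$ must be trace preserving, i.e.\ $\Tr_F[(\mathsf{M}_1\otimes\cdots\otimes\mathsf{M}_T)*W] = \id^P$, for every family of CPTP inputs; since positivity is already handled, it suffices to demand this for all Hermitian $\mathsf{M}_k$ obeying $\Tr_{A^O_k}\mathsf{M}_k = \id^{A^I_k}$. The set of such Choi matrices is affine: its base point is the tracial channel, and its linear part consists of the ``deviations'' $N_k$ with $\Tr_{A^O_k} N_k = 0$, which in the trace-out-and-replace language are exactly the operators singled out by ${}_{[1-A^O_k]}$. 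The requirement then splits into two kinds of conditions: evaluation at the base point fixes $\Tr_F W$ up to the global normalisation, giving ${}_{[1-P]}W^{[P]} = 0$ together with $\Tr W = d^O$; and the demand that the output trace be insensitive to the deviation in each subset $\K$ of slots (while the complementary slots are traced out) gives, after organising the contributions by inclusion, exactly one constraint ${}_{\Pi_{k\in\K}[1-A^O_k]}W^{[P\K]} = 0$ per nonempty $\K \subseteq \T$.

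For the converse one runs these implications backwards: positivity of $W$ yields complete CP-preservation through the Choi argument, and a $W$ satisfying all the listed linear constraints produces a TP output for every CPTP input, so that $\mathsf{S}(\cdot)=(\cdot)*W$ is a bona fide supermap. \textbf{The main obstacle} I anticipate is the combinatorial heart of the previous step: showing that the single ``global'' trace-preservation equation, quantified over all CPTP inputs, collapses to precisely the nested family indexed by the subsets $\K$, with the product of projectors $\prod_{k\in\K}[1-A^O_k]$ acting on the reduced matrix $W^{[P\K]}$. Disentangling the contribution of each subset of slots without redundancy---so that one obtains exactly one independent constraint per nonempty $\K$, rather than an overdetermined or incomplete set---is the delicate part, and is exactly the projector-algebra computation carried out in Refs.~\cite{araujo_witnessing_2015,wechs_quantum_2021}, which I would either reproduce in detail or invoke.
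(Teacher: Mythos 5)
The paper does not actually prove this proposition; it is imported verbatim from the cited references, so there is no in-paper argument to compare against. Your sketch is nonetheless a faithful reconstruction of the standard proof found in those references: positivity of $W$ is equivalent to complete CP-preservation via Choi's theorem (and your ``concretely'' remark correctly handles the subtlety that CP-preservation is only tested on slot-wise \emph{product} inputs, since feeding the maximally entangled Choi states into reference-extended slots already recovers $W\ge 0$, while the converse uses that the link product of positive semidefinite matrices is positive semidefinite); and the TP-preservation condition, being affine in each slot, reduces to Hermitian inputs with $\Tr_{A_k^O}\mathsf{M}_k=\id^{A_k^I}$, whose decomposition into the tracial base point plus traceless deviations $N_k$ yields, by multilinearity and independence of the deviations, exactly one constraint per nonempty $\K\subseteq\T$ plus the base-point condition ${}_{[1-P]}W^{[P]}=0$ together with $\Tr W = d^O$. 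The one step you defer---identifying the vanishing of $\Tr_F\bigl[\bigl(\bigotimes_{k\in\K}N_k\otimes\bigotimes_{k\notin\K}\tfrac{\id}{d_k^O}\bigr)*W\bigr]$ for all such $N_k$ with the condition ${}_{\prod_{k\in\K}[1-A_k^O]}W^{[P\K]}=0$---is indeed the only nontrivial content, amounting to the observation that each ${}_{[1-A_k^O]}$ is the orthogonal projection onto the traceless-on-$A_k^O$ subspace and that these projections commute, so their composition projects onto the tensor product of those subspaces; reproducing or invoking the references for that computation, as you propose, is exactly what the paper itself does for the whole proposition.
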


\section{Characterisation of FO-supermaps}
\label{appendix:FO}

In a $T$-input fixed order quantum supermap, the input operations are applied in a fixed sequential order. 
For example, the order $(1, \dots, T)$ means that the input operations are applied in the order they are passed as arguments. 
Such a supermap can be represented in the Choi picture by its process matrix $W \in\L(\H^{PA^{IO}_{\T}F})$~\cite{chiribella_theoretical_2009,chiribella_quantum_2008}, which can be decomposed as the link product of some quantum channels, $W = \mathsf{V}_{T+1} * \dots * \mathsf{V}_1$, with the $\mathsf{V}_i$'s, for $2 \leq t \leq T$, being the Choi matrices of quantum channels $\V_t : \L(\H^{A^O_t \alpha_t}) \to \L(\H^{A^I_{t+1} \alpha_{t+1}})$, and $\V_1 : \L(\H^P) \to \L(\H^{A^I_{1} \alpha_1})$ and $\V_{T+1} : \L(\H^{A^O_{T} \alpha_T}) \to \L(\H^F)$, where the $\H^{\alpha_t}$ are ancillary Hilbert spaces and can, without loss of generality, be taken to be isomorphic (we differentiate them for clarity and to write unambiguously the link product of all $\mathsf{V}_t$ as above). 
Let us write the set of all process matrices of $T$-input FO-supermaps as $\W^{\text{FO}}$.

A nice characterisation of the set $\W^{\text{FO}}$ is obtained by combining the trace-preserving conditions on the Choi matrices of the $\mathsf{V}_t$ and the decomposition of $W$ into their link product.

\begin{proposition}[From \cite{wechs_quantum_2021, chiribella_quantum_2008}]
A matrix $W \in\L(\H^{PA^{IO}_{\T}F})$ is the process matrix of a $T$-input quantum supermap with fixed order $(1, \dots, T)$ (i.e, $W \in \W^{\text{FO}}$) if and only if it is positive semidefinite and its reduced matrices, defined for $1 \leq t \leq T$ as
\begin{equation}
\label{eq:FO1}
    W_{(t)} = \frac{1}{d^O_t d^O_{t+1} \cdots d^O_T}\Tr_{A^O_tA^{IO}_{\{t+1, \dots, T\}}F}W \in \L(\H^{PA^{IO}_{\{1, \dots, t-1\}}A^I_t}),
\end{equation}
satisfy
\begin{equation}
\label{eq:FO2}
\begin{aligned}
    & \Tr_{A^I_1} W_{(1)} = \id^P, \\
    &\forall n =1, \dots, T - 1, \ \Tr_{A^I_{t+1}} W_{(t+1)} = W_{(t)} \otimes \id^{A_t^O}, \\
    &\text{and} \ \Tr_F W = W_{(T)} \otimes \id^{A_T^O}.
\end{aligned}
\end{equation}
\end{proposition}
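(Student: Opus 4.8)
The plan is to prove both directions of the equivalence separately: the forward implication follows from direct manipulation of link products together with the trace-preserving conditions on the constituent channels, while the reverse implication requires an inductive realisation argument reconstructing the channels from $W$.

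For the forward direction, suppose $W = \mathsf{V}_{T+1} * \cdots * \mathsf{V}_1$ is built from channels $\V_1, \dots, \V_{T+1}$ as in the definition of an FO-supermap. Positivity of $W$ is immediate: each $\mathsf{V}_t \geq 0$ since the $\V_t$ are CP, and the link product of positive operators in which each Hilbert space appears at most twice is again positive (this is just the statement that the composition of CP maps is CP, read through the correspondence of Appendix~\ref{appendix:link_prod}; note each ancilla $\H^{\alpha_t}$ appears exactly twice and all other spaces once). For the trace conditions, I would use that the TP property of a channel $\V_t$ reads $\Tr_{\textup{out}} \mathsf{V}_t = \id^{\textup{in}}$ at the level of Choi matrices. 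Tracing out $\H^F$ and invoking TP of $\V_{T+1}$ gives $\Tr_F W = \id^{A^O_T} \otimes \Tr_{\alpha_T}(\mathsf{V}_T * \cdots * \mathsf{V}_1)$, where I have used that a link product with an identity over an ancilla reduces to a partial trace over that ancilla. Recognising the remaining factor, after the normalisation in \eqref{eq:FO1}, as $W_{(T)}$ then yields $\Tr_F W = W_{(T)} \otimes \id^{A^O_T}$. Iterating — at each step peeling off the TP condition of the appropriate $\V_t$ — gives the recursion $\Tr_{A^I_{t+1}} W_{(t+1)} = W_{(t)} \otimes \id^{A^O_t}$ and finally $\Tr_{A^I_1} W_{(1)} = \id^P$ from TP of $\V_1$. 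The normalisation factors $1/(d^O_t \cdots d^O_T)$ in \eqref{eq:FO1} are exactly what is needed to absorb the identities $\id^{A^O_s}$ produced by each TP condition.

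For the reverse direction, suppose $W \geq 0$ satisfies the conditions \eqref{eq:FO2}; I would construct the channels $\V_1, \dots, \V_{T+1}$ by induction on $t$ and verify their link product reproduces $W$. The condition $\Tr_{A^I_1} W_{(1)} = \id^P$ says precisely that $W_{(1)}$ is the Choi matrix of a CPTP map $\H^P \to \H^{A^I_1}$, which, together with a suitable ancilla $\H^{\alpha_1}$, furnishes $\V_1$. At each subsequent step, the compatibility condition $\Tr_{A^I_{t+1}} W_{(t+1)} = W_{(t)} \otimes \id^{A^O_t}$ guarantees that the increment from $W_{(t)}$ to $W_{(t+1)}$ can be realised by a CPTP map acting on the output $\H^{A^O_t}$ together with the accumulated memory; concretely one obtains $\V_{t+1}$ from a Stinespring-type dilation of this increment, choosing $\H^{\alpha_{t+1}}$ large enough to purify it. The final condition $\Tr_F W = W_{(T)} \otimes \id^{A^O_T}$ supplies the closing channel $\V_{T+1}$ into $\H^F$. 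Reassembling and using associativity of the link product together with the TP normalisations, one checks that $\mathsf{V}_{T+1} * \cdots * \mathsf{V}_1 = W$.

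The main obstacle is this reverse construction: one must define the memory spaces $\H^{\alpha_t}$ consistently across the induction and verify both that the dilated maps are genuinely trace-preserving and that their link product recovers $W$ \emph{exactly}, not merely its reduced matrices $W_{(t)}$. This is where the recursive conditions \eqref{eq:FO2} do the essential work, being exactly the constraints ensuring the sequence of increments is compatible with a single coherent circuit realisation; the careful bookkeeping of the dimension factors $d^O_t$ and the partial traces is what makes the two sides match.
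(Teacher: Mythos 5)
The paper does not actually prove this proposition: it is imported from Refs.~\cite{wechs_quantum_2021,chiribella_quantum_2008}, and the only in-paper commentary is the remark that \eqref{eq:FO1}--\eqref{eq:FO2} define the linear subspace $\mathcal{L}^{\text{FO}}$ and already enforce $\Tr W = d^O$. So there is no internal proof to compare against; your outline follows the standard route of the cited references. Your forward direction is essentially complete: positivity is preserved by the link product of positive operators, and peeling off the trace-preserving conditions $\Tr_{\text{out}}\mathsf{V}_t = \id^{\text{in}}$ one channel at a time, with the link over each ancilla $\H^{\alpha_t}$ collapsing to a partial trace, produces exactly the recursion \eqref{eq:FO2}; your accounting of the factors $1/(d^O_t\cdots d^O_T)$ in \eqref{eq:FO1} as absorbing the identities generated by each TP condition is correct.

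The genuine gap is in the reverse direction, and it sits precisely where you flag ``the main obstacle''. The sentence asserting that the compatibility condition ``guarantees that the increment from $W_{(t)}$ to $W_{(t+1)}$ can be realised by a CPTP map \ldots via a Stinespring-type dilation of this increment'' is the entire content of the realization theorem, and as stated it is not an argument: the ``increment'' is not a well-defined CP map until one specifies how the memory is carried, so there is nothing yet to dilate. The mechanism that actually closes this step (as in Ref.~\cite{chiribella_quantum_2008}) is the essential uniqueness of purifications. One fixes a purification $\ket{w_t}$ of each $W_{(t)}$ with purifying system $\H^{\alpha_t}$ (and a purification of $W$ itself for the last step). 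The condition $\Tr_{A^I_{t+1}} W_{(t+1)} = W_{(t)}\otimes\id^{A^O_t}$ exhibits two purifications of the same operator on $\H^{PA^{IO}_{\{1,\dots,t-1\}}A^I_tA^O_t}$: the vector $\ket{w_{t+1}}$ with purifying system $\H^{A^I_{t+1}\alpha_{t+1}}$, and $\ket{w_t}$ tensored with a maximally entangled vector on $\H^{A^O_t}$ and an auxiliary copy, with purifying system $\H^{\alpha_t}$ together with that copy. Uniqueness of purification then yields an isometry $V_{t+1}:\H^{A^O_t\alpha_t}\to\H^{A^I_{t+1}\alpha_{t+1}}$ mapping one purification to the other (enlarging $\H^{\alpha_{t+1}}$ if needed to extend a partial isometry to an isometry); isometries are automatically trace-preserving, and composing them reproduces the purification of $W$ itself --- which is what settles your separate worry that the link product must recover $W$ exactly and not merely its marginals. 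Without this purification argument your induction does not get started, so it should either be supplied or explicitly cited rather than asserted.
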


Note that the linear constraints~\eqref{eq:FO1} and \eqref{eq:FO2} both enforce the normalisation contraints $\Tr W = d^O$ and define a linear subspace $\mathcal{L}^\text{FO}$, so that, in analogy to the characterisation of $\mathcal{W}^\text{Gen}$ in Proposition~\ref{prop:GenCharachterisation}, a matrix $W\in\L(\H^{PA^{IO}_{\T}F})$ is in $\mathcal{W}^\text{FO}$ if and only if $W\ge 0$, $\Tr W = d^O$ and $W\in \mathcal{L}^\text{FO}$~\cite{Wechs_2019}.


\section{Description and characterisation of QC-supermaps}
\label{appendix:QC}

Supermaps with quantum control of causal order (or QC-supermaps) are a form of generalised quantum circuits where a quantum system is used to coherently control the order in which the input operations are applied.
For simplicity, we will assume throughout the presentation of QC-supermaps here that the input operations are unitaries $(U_1,\dots,U_T)$.
For a more detailed presentation and the (rather simple) generalisation to generic channels as input operations (which leads to the same set of corresponding process matrices), see~\cite{wechs_quantum_2021}.

A QC-supermap alternates between applying some controlled ``internal'' operations on a ``target'' system and some ancillary systems, potentially correlating these systems with the quantum control, and using this control system to control which input operations to apply at time-step $t$.
More precisely, throughout the computation the control system takes, at time-step $t$, the basis states  $\ket{\K_{t-1},k_t}^{C_{t}^{(\prime)}}\in\H^{C_t^{(\prime)}}$, specifying that operation $k_t\notin \K_{t-1}$ will be applied and that the operations $\K_{t-1}\subsetneq\T$ have already been applied.
This control system controls the application of the different $U_{k_t}$ and the pure operations $V_{\K_{t-1},k_t}^{\to k_{t+1}}$ between time-steps, and is the minimal control system required to ensure that no operation is applied more than once, while maintaining the possibility for superpositions of causal orders and interference of different causal histories (i.e., permutations of the operations in $\K_{t-1}$).

A QC-supermap is completely specified by the composants of the internal operations, $V_{\emptyset,\emptyset}^{\to k_1}:\H^P \to \H^{A_{k_1}^I\alpha_1}$, $V_{\K_{t-1},k_t}^{\to k_{t+1}}: \H^{A_{k_t}^O\alpha_t}\to\H^{A_{k_{t+1}}^I\alpha_{t+1}}$ (for $1\le t \le T-1$) and $V_{\K_{T-1},k_T}^{\to F}:\H^{A_{k_T}^O\alpha_T}\to\H^{F\alpha_F}$, where the $\H^{\alpha_t}$ and $\H^{\alpha_F}$ are again ancillary spaces which, without loss of generality, can be assumed to be isomorphic at each time-step.
In order to formally express the control of both the internal and external (input) operations -- which act on \emph{a priori} different spaces -- it is necessary to embed these operations in some generic input and output spaces $\H^{\tilde{A}_t^I}$ and $\H^{\tilde{A}_t^O}$ for each time-step $t$~\cite{wechs_quantum_2021}.
We denote the corresponding operators $\tilde{U}_{k_t}^{[t]}:\H^{\tilde{A}_{t}^I}\to\H^{\tilde{A}_{t}^O}$ (for each $k_t\in\T$), $\tilde{V}_{\emptyset,\emptyset}^{\to k_1}:\H^P \to \H^{\tilde{A}_{1}^I\alpha_1}$, $\tilde{V}_{\K_{t-1},k_t}^{\to k_{t+1}}: \H^{\tilde{A}_{t}^O\alpha_t}\to\H^{\tilde{A}_{t+1}^I\alpha_{t+1}}$ (for $1\le t \le T-1$) and $\tilde{V}_{\K_{T-1},k_T}^{\to F}:\H^{\tilde{A}_{T}^O\alpha_T}\to\H^{F\alpha_F}$.

A QC-supermap thus proceeds by applying, at each time-step $t$, the coherently controlled input operations 
\begin{equation}
\label{eq:embedding}
	\tilde{U}^{[t]}=\sum_{\K_{t-1},k_t}\tilde{U}_{k_t}^{[t]}\otimes\ket{\K_{t-1},k_t}^{C_t'}\bra{\K_{t-1},k_t}^{C_t},
\end{equation}
while between time-steps it applies the internal operations,
\begin{align}
\label{eq:genQC}
    \tilde{V}_1 &= \sum_{k_1} \tilde{V}^{\to k_1}_{\emptyset, \emptyset} \otimes \ket{\emptyset, k_1}^{C_1}: \H^P \to \H^{\tilde{A}_1^I\alpha_1 C_1}, \\
    \tilde{V}_{t+1} &= \sum_{\substack{\K_{t-1}, \\ k_t, k_{t+1}}} \tilde{V}^{\to k_{t+1}}_{\K_{t-1}, k_t} \otimes \ket{ \K_{t-1} \cup k_t, k_{t+1}}^{C_{t+1}}\bra{\K_{t-1},k_t}^{C_t'} : \H^{\tilde{A}^O_t\alpha_t C'_t} \to \H^{\tilde{A}^I_{t+1}\alpha_{t+1} C_{t+1}}, \\
\label{eq:genQC2}
     \tilde{V}_{T+1} &= \sum_{k_T} \tilde{V}^{\to F}_{\T \setminus k_T, k_T} \otimes \bra{\T \setminus k_T, k_T}^{C_T'}: \H^{\tilde{A}_T^O\alpha_T C'_T} \to \H^{F \alpha_F},
\end{align}
which are required to be pure isometries.

The process matrix is obtained from the internal operations as 
\begin{align}\label{eq:WfromProcessVector}
    W &= \Tr_{\alpha_F}\! \ketbra*{w_{(\T, F)}}{w_{(\T, F)}} \in \L(\H^{PA^{IO}_\T F}),
\end{align}
with $\ket*{w_{(\T, F)}} = \sum_{(k_1, \dots, k_T)} \ket*{w_{(k_1, \dots, k_T, F)}}$ and
\begin{equation}\label{eq:processVector}
\ket*{w_{(k_1, \dots, k_T, F)}} = \dket{V^{\to k_1}_{\emptyset, \emptyset}} * 
\dket{V^{\to k_2}_{\emptyset, k_1}} * 
\dket{V^{\to k_3}_{\{k_1\}, k_2}} * \dots * 
\dket{V^{\to k_T}_{\{k_1, \dots, k_{T-2}\}, k_{T-1}}} * 
\dket{V^{\to F}_{\{k_1, \dots, k_{T-1}\}, k_T} } 
\in \H^{PA^{IO}_\T F \alpha_F}.
\end{equation}
Let us write the set of all process matrices of $T$-input QC-supermaps as $\W^{\text{QC}}$.

One can readily verify that the action of a QC-supermap $\S$ on input operations $(U_1,\dots,U_T)$, as calculated through the process matrix, indeed coincides with that obtained by applying iteratively the internal operations $\tilde{V}_t$ and controlled operations $\tilde{U}^{[t]}$ in the generic input and output spaces.
That is, 
\begin{align}
	\mathsf{S}(\mathsf{U}_1,\dots,\mathsf{U}_T) &= (\mathsf{U}_1\otimes \cdots \mathsf{U}_T)*W \\
    \label{eq:QCAction}
	&= \Tr_{\alpha_F}[\tilde{\mathsf{V}}_{T+1} * \tilde{\mathsf{U}}^{[T]} * \tilde{\mathsf{V}}_T * \cdots * \tilde{\mathsf{V}}_2 * \tilde{\mathsf{U}}^{[1]} * \tilde{\mathsf{V}}_1].
\end{align}

As for the other classes of supermaps we presented, one can obtain a nice characterisation of $\W^{\text{QC}}$ from the requirement for the internal operations~\eqref{eq:genQC} to be trace-preserving isometries.

\begin{proposition}[From \cite{wechs_quantum_2021}]
A matrix $W \in \L(\H^{PA^{IO}_{\T}F})$ is the process matrix of a $T$-input QC-supermap (i.e, $W \in \mathcal{W}^{\textup{QC}}$) if and only if there exist some positive semidefinite matrices $W_{(\K_{t-1}, k_t)} \in \L(\H^{PA^{IO}_{\K_{t-1}}A^{I}_{k_t}})$, for all strict subsets $\K_{t-1}$ of $\T$ and all $k_t \in \T \setminus \K_{t-1}$, satisfying
\begin{equation}
\label{eq:QC}
    \begin{aligned}
    &\sum_{k_1 \in \T} \Tr_{A^I_{k_1}} W_{(\emptyset, k_1)} = \id^P, \\
    &\forall \emptyset \subsetneq \K_t \subsetneq \T, \ \sum_{k_{t+1} \in \T \setminus \K_t} \Tr_{A^I_{k_{t+1}}} W_{(\K_t, k_{t+1})} = \sum_{k_t \in \K_t} W_{(\K_t \setminus k_t, k_t)} \otimes \id^{A^O_{k_t}}, \\
    &\text{and} \  \Tr_F W = \sum_{k_T \in \T} W_{(\T \setminus k_T, k_T)} \otimes \id^O_{k_T}.
    \end{aligned}
\end{equation}
\end{proposition}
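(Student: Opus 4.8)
The plan is to prove both implications by connecting the process matrix \eqref{eq:WfromProcessVector} to the trace-preserving (isometry) conditions on the internal operations \eqref{eq:genQC}--\eqref{eq:genQC2}, with each $W_{(\K_{t-1},k_t)}$ playing the role of a \emph{partial process matrix}: the Choi operator describing the target and ancilla just after the operations in $\K_{t-1}$ have been applied (in coherent superposition over their orders) and immediately before operation $k_t$ is applied. Concretely, truncating \eqref{eq:processVector} I would set, for an ordered history $(k_1,\dots,k_{t-1})$ with underlying set $\K_{t-1}$, $\ket{w_{(k_1,\dots,k_{t-1},k_t)}} = \dket{V^{\to k_1}_{\emptyset,\emptyset}} * \cdots * \dket{V^{\to k_t}_{\{k_1,\dots,k_{t-2}\},k_{t-1}}}\in\H^{PA^{IO}_{\K_{t-1}}A^I_{k_t}\alpha_t}$, coherently sum over the orders of $\K_{t-1}$ to obtain $\ket{w_{(\K_{t-1},k_t)}}$, and define $W_{(\K_{t-1},k_t)}=\Tr_{\alpha_t}\ketbra{w_{(\K_{t-1},k_t)}}{w_{(\K_{t-1},k_t)}}$, which is manifestly positive semidefinite and lives in the stated space.

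\textbf{Forward direction.} Here the candidate family is the one just defined, and the whole content is to derive \eqref{eq:QC} from the isometry conditions. The key tool is the Choi-vector identity $\Tr_{\mathrm{out}}\dketbra{V}{V}=\id^{\mathrm{in}}$, valid for any isometry $V$. Grouping the histories of $\K_t$ by their last element gives the recursion $\ket{w_{(\K_t,k_{t+1})}}=\sum_{k_t\in\K_t}\ket{w_{(\K_t\setminus k_t,k_t)}}*\dket{V^{\to k_{t+1}}_{\K_t\setminus k_t,k_t}}$, and I would compute $\sum_{k_{t+1}}\Tr_{A^I_{k_{t+1}}}W_{(\K_t,k_{t+1})}$ by expanding the outer product. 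The diagonal terms $k_t=k'_t$ collapse, via the identity above together with the diagonal block of $\tilde V_{t+1}^\dagger\tilde V_{t+1}=\id$, to $W_{(\K_t\setminus k_t,k_t)}\otimes\id^{A^O_{k_t}}$, reproducing the right-hand side of the middle line of \eqref{eq:QC}; the cross terms $k_t\ne k'_t$ vanish precisely because of the off-diagonal block of the same isometry condition. The first and last lines of \eqref{eq:QC} are then the trace-preservation of $\tilde V_1$ and $\tilde V_{T+1}$, and iterating all of them yields $\Tr W=d^O$.

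\textbf{Backward direction (the substantial one).} Given positive semidefinite $W_{(\K_{t-1},k_t)}$ obeying \eqref{eq:QC}, I would reconstruct the component isometries by induction on $t$. At each step I purify the relevant matrices into vectors over a fresh ancilla and read the middle line of \eqref{eq:QC} as a Stinespring-type consistency condition: the equality of the total output marginal $\sum_{k_{t+1}}\Tr_{A^I_{k_{t+1}}}W_{(\K_t,k_{t+1})}$ with $\sum_{k_t}W_{(\K_t\setminus k_t,k_t)}\otimes\id^{A^O_{k_t}}$ is exactly what is needed for a single isometry to map the coherent superposition of earlier process vectors, extended on the output slot $A^O_{k_t}$, onto the later ones, up to the unitary freedom in purifications. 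Assembling these component isometries into the coherently controlled operations \eqref{eq:genQC}--\eqref{eq:genQC2} and feeding them through \eqref{eq:WfromProcessVector} returns a process matrix equal to $W$ by the forward computation, certifying $W\in\W^{\mathrm{QC}}$.

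\textbf{Main obstacle.} The crux is the backward construction of genuinely coherent internal isometries rather than a mere branch-by-branch or classically controlled family: one must choose the purifications consistently across all next-operations $k_{t+1}$ and all causal histories in $\K_t$ at once, so that both the diagonal and the off-diagonal blocks of $\tilde V_{t+1}^\dagger\tilde V_{t+1}=\id$ hold simultaneously. This forces careful bookkeeping of the control register $\ket{\K_{t-1},k_t}$ through the embeddings into the generic spaces $\H^{\tilde A^I_t},\H^{\tilde A^O_t}$ of \eqref{eq:embedding}, and a verification that the interference between distinct causal histories---the very feature separating QC-supermaps from convex mixtures of FO-supermaps---is faithfully captured by the coherent sums defining $\ket{w_{(\K_{t-1},k_t)}}$. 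Bounding the dimensions of the ancillas $\alpha_t$ so that the inductive construction terminates is a routine but necessary final check.
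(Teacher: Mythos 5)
The paper itself gives no proof of this proposition---it is imported verbatim from Ref.~\cite{wechs_quantum_2021}, with only the remark that the characterisation follows from the internal operations~\eqref{eq:genQC}--\eqref{eq:genQC2} being trace-preserving isometries. Your sketch correctly reconstructs that reference's actual argument---defining each $W_{(\K_{t-1},k_t)}$ as the ancilla-traced projector onto the coherent sum of truncated process vectors over all orderings of $\K_{t-1}$, deriving Eq.~\eqref{eq:QC} from the diagonal and off-diagonal blocks of $\tilde{V}_{t+1}^\dagger\tilde{V}_{t+1}=\id$, and inverting via purifications and a Stinespring-type isometry-extension lemma applied jointly across branches (noting only that the converse additionally needs $W\geq 0$, which the statement leaves implicit)---so it is essentially the same approach and sound at this level of detail.
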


\section{Proof of Theorem~\ref{th:QC-S}}
\label{appendix:equiv}

In this appendix, we give a proof of Theorem~\ref{th:QC-S}. 
This theorem states that for any $T$-input QC-supermap, one can define a $T$-input FO-supermap such that their action on $T$ copies of any unitary $U$ is the same.
\QC*
\noindent
\begin{proof}
Let us consider a $T$-input QC-supermap $\mathcal{S}^{\text{QC}}$, as well as $T$ copies of some unitary $U$ labelled from 1 to $T$ as $(U^{(1)}, \dots, U^{(T)})$, with $U^{(t)}: \H^{A_t^I} \to \H^{A_t^O}$ (for $1 \leq t \leq T$) to distinguish which Hilbert spaces they act on.
We emphasise that all the $\H^{A_t^I}$ and $\H^{A_t^O}$ are isomorphic, and the $U^{(t)}$ have formally the same action on their respective spaces. 
Following Eqs.~\eqref{eq:genQC}--\eqref{eq:genQC2}, the QC-supermap $\mathcal{S}^{\text{QC}}$ can be represented as a generalised quantum circuit whose internal operations are defined on some generic input and output spaces $\H^{\tilde{A}_t^I}$ and $\H^{\tilde{A}_t^O}$ (which again are here all isomorphic to the $\H^{A_{k_t}^I}$ and $\H^{A_{k_t}^O}$) for each time-step $t$:

\begin{align}
\label{eq:inner}
    \tilde{V}_1 &= \sum_{k_1} \tilde{V}^{\to k_1}_{\emptyset, \emptyset} \otimes \ket{\emptyset, k_1}^{C_1}: \H^P \to \H^{\tilde{A}_1^I\alpha_1 C_1}, \\
    \tilde{V}_{t+1} &= \sum_{\substack{\K_{t-1}, \\ k_t, k_{t+1}}} \tilde{V}^{\to k_{t+1}}_{\K_{t-1}, k_t} \otimes \ket{ \K_{t-1} \cup k_t, k_{t+1}}^{C_{t+1}}\bra{\K_{t-1},k_t}^{C_t'} : \H^{\tilde{A}^O_t\alpha_t C'_t} \to \H^{\tilde{A}^I_{t+1}\alpha_{t+1} C_{t+1}}, \\
    \label{eq:inner2}
     \tilde{V}_{T+1} &= \sum_{k_T} \tilde{V}^{\to F}_{\T \setminus k_T, k_T} \otimes \bra{\T \setminus k_T, k_T}^{C_T'}: \H^{\tilde{A}_T^O\alpha_T C'_T} \to \H^{F \alpha_F}.
\end{align}

From these internal operations let us define an FO-supermap $\S^{\text{FO}}$, independent of $U$, whose action on $T$ copies of any unitary $U$ will be the same as that of the QC-supermap $\S^{\text{Gen}}$. 
Recall (see Appendix~\ref{appendix:FO}) that, to define $\S^{\text{FO}}$, it suffices to specify the $T+1$ quantum channels (or, equivalently, their Choi matrices) $\V_t$ (for $1\le t \le T$) and $\V_{T+1}$ which are applied in alternation with the input operations.
To this end, let us consider the 
operations $V'^{\to k_1}_{\emptyset, \emptyset} : \H^{P}\to \H^{A_1^I\alpha_1}$, $V'^{\to k_{t+1}}_{\K_{t-1}, k_t} : \H^{A_t^O\alpha_t}\to\H^{A_{t+1}^I\alpha_{t+1}}$ and $V'^{\to F}_{\T \setminus k_T, k_T}:\H^{A_T^O\alpha_T}\to\H^{F}$ which have the same action as the corresponding operations in Eqs.~\eqref{eq:inner}--\eqref{eq:inner2} but act now on the non-tilded spaces $\H^{A_t^I} \cong \H^{\tilde{A}_t^I}$ and $\H^{A_t^O} \cong \H^{\tilde{A}_t^O}$ (for $1 \leq t \leq T)$.
Note that this is possible precisely because all the input and output spaces are isomorphic.
By relabelling the control spaces $C_t'$ to $C_t$ (since, in a FO-supermap, the input operations are not controlled and there is thus no need to distinguish the control system before and after it is used), we hence define the isometries
\begin{align}
\label{eq:innerFO}
        V'_1 &= \sum_{k_1} V'^{\to k_1}_{\emptyset, \emptyset} \otimes \ket{\emptyset, k_1}^{C_1}: \H^P \to \H^{A_1^I\alpha_1 C_1}, \\
        V'_{t+1} &= \sum_{\substack{\K_{t-1}, \\ k_t, k_{t+1}}}  V'^{\to k_{t+1}}_{\K_{t-1}, k_t} \otimes \ket{ \K_{t-1} \cup k_t, k_{t+1}}^{C_{t+1}}\bra{\K_{t-1},k_t}^{C_t} : \H^{A^O_t\alpha_t C_t} \to \H^{A^I_{t+1}\alpha_{t+1} C_{t+1}}, \\
        \label{eq:innerFO2}
        \hat{V}'_{T + 1} &= \sum_{k_T}  V'^{\to F}_{\T \setminus k_T, k_T} \otimes \bra{\T \setminus k_T, k_T}^{C_T} : \H^{A_T^O\alpha_T C_T} \to \H^{F\alpha_F}.
\end{align}
For the FO-supermap we then reinterpret the control systems $\H^{C_t}$ as part of the ancillary spaces, taking $\H^{\alpha_t'}:=\H^{\alpha_t C_t}$, so that the Choi matrices of the channels defining $\S^{\text{FO}}$ are $\mathsf{V}_t'=\dketbra{V_t'}{V_t'}$ (for $1\le t \le T$) and $\mathsf{V}_{T+1}'=\Tr_{\alpha_F}\dketbra{\hat{V}_{T+1}'}{\hat{V}_{T+1}'}$.
The fact that these are indeed valid channels follows from the fact that Eqs.~\eqref{eq:inner}--\eqref{eq:inner2} specify valid isometries and the isomorphisms between the generic (tilded) and specific (non-tilded) input and output spaces.
The action of $\S^{\text{FO}}$ on $T$ copies of $U$, according to Eq.~\eqref{eq:ActionFO}, is
\begin{equation}
\label{eq:actionFO_U}
\mathsf{S}^{\text{FO}}(\mathsf{U}^{(1)}, \dots, \mathsf{U}^{(T)}) = \mathsf{V}'_{T+1}* \mathsf{U}^{(T)} * \dots *\mathsf{V}'_2 * \mathsf{U}^{(1)} *\mathsf{V}'_1.
\end{equation}

We now show that the action of the QC-supermap $\S^{\text{QC}}$ on $T$-copies of $U$ is equivalent to that of $\S^{\text{FO}}$ given by Eq.~\eqref{eq:actionFO_U}. 
Recall (see Appendix~\ref{appendix:QC}) that, at each time-step $t$, $\S^{\text{QC}}$ proceeds by applying the coherently controlled operations $\tilde{U}^{[t]}=\sum_{\K_{t-1},k_t}\tilde{U}_{k_t}^{[t]}\otimes\ket{\K_{t-1},k_t}^{C_t'}\bra{\K_{t-1},k_t}^{C_t}$, where $\tilde{U}^{[t]}_{k_t}: \H^{\tilde{A}_{t}^I}\to\H^{\tilde{A}_{t}^O}$ (for $1 \leq k_t \leq T$) is an embedding of $U^{(k_t)}$ into the generic spaces.
Since each $U^{(k_t)}$ has formally the same action, all the $\tilde{U}_{k_t}^{[t]}$ (for a given $t$) are in fact identically the same operation; let us denote these unitary $\tilde{U}^{(t)}:\H^{\tilde{A}_t^I}\to\H^{\tilde{A}_t^O}$.
The coherently controlled operations $\tilde{U}^{[t]}:\H^{\tilde{A}_{t}^I C_t}\to\H^{\tilde{A}_{t}^O C_t'}$ can thus be written in the factorised form
\begin{align}
    \label{eq:abuse}
	\tilde{U}^{[t]} &= \tilde{U}^{(t)} \otimes \sum_{\K_{t-1},k_t} \ket{\K_{t-1},k_t}^{C_t'}\bra{\K_{t-1},k_t}^{C_t} \\
    \label{eq:fact}
    &=  \tilde{U}^{(t)} \otimes \id^{C_t \to C'_t}.
\end{align}
Then following Eq.~\eqref{eq:QCAction}, the action of the QC-supermap on the $T$ copies of $U$ is 
\begin{align}
	\mathsf{S}^{\text{QC}}(\mathsf{U}^{(1)},\dots,\mathsf{U}^{(T)}) &= \Tr_{\alpha_F}[\mathsf{\tilde{V}}_{T+1} * \mathsf{\tilde{U}}^{[T]} * \mathsf{\tilde{V}}_T * \cdots * \mathsf{\tilde{V}}_2 * \mathsf{\tilde{U}}^{[1]} * \mathsf{\tilde{V}}_1] \\
    \label{decomp}
    &= \Tr_{\alpha_F}[\mathsf{\tilde{V}}_{T+1} * (\mathsf{\tilde{U}}^{(T)} * \dketbra{\id}{\id}^{C_t \to C_t'}) * \mathsf{\tilde{V}}_T * \cdots * \mathsf{\tilde{V}}_2 * (\mathsf{\tilde{U}}^{(1)} * \dketbra{\id}{\id}^{C_1 \to C_1'}) * \mathsf{\tilde{V}}_1] \\
    \label{comm}
    &= \Tr_{\alpha_F}[\mathsf{\tilde{V}}_{T+1} * \dketbra{\id}{\id}^{C_t \to C_t'} * \mathsf{\tilde{U}}^{(T)} * \mathsf{\tilde{V}}_T * \cdots * \mathsf{\tilde{V}}_2 * \dketbra{\id}{\id}^{C_1 \to C_1'} * \mathsf{\tilde{U}}^{(1)}  * \mathsf{\tilde{V}}_1] \\
    &= \Tr_{\alpha_F}[\mathsf{\tilde{V}}_{T+1} * \dketbra{\id}{\id}^{C_t \to C_t'}] * \mathsf{\tilde{U}}^{(T)} * \mathsf{\tilde{V}}_T * \cdots * \mathsf{\tilde{V}}_2 * \dketbra{\id}{\id}^{C_1 \to C_1'} * \mathsf{\tilde{U}}^{(1)}  * \mathsf{\tilde{V}}_1,
\end{align}
where the second line follows from Eq.~\eqref{eq:fact}, the third from the commutation of the link product, and the fourth from the fact that only $\mathsf{\tilde{V}}_{T+1}$ is defined on $\H^{\alpha_F}$. 
Now by defining $\mathsf{\tilde{V}}_{T+1}' = \Tr_{\alpha_F}[\mathsf{\tilde{V}}_{T+1} * \dketbra{\id}{\id}^{C_t \to C_t'}]$, $\mathsf{\tilde{V}}_{t+1}' = \mathsf{\tilde{V}}_{t+1} * \dketbra{\id}{\id}^{C_t \to C_t'}$ (for $1 \leq t \leq T-1$) and $\mathsf{\tilde{V}}_1' = \mathsf{\tilde{V}}_1$, we have
\begin{equation}
	\mathsf{S}^{\text{QC}}(\mathsf{U}^{(1)},\dots,\mathsf{U}^{(T)}) = \mathsf{\tilde{V}}'_{T+1}* \mathsf{\tilde{U}}^{(T)} * \dots *\mathsf{\tilde{V}}'_2 * \mathsf{\tilde{U}}^{(1)} *\mathsf{\tilde{V}}'_1.
\end{equation}
For $1 < t \leq T$, the link product $\mathsf{\tilde{V}}_{t+1} * \dketbra{\id}{\id}^{C_t \to C_t'}$ is the composition of $\tilde{V}_{t+1}: \H^{\tilde{A}^O_t\alpha_t C'_t} \to \H^{\tilde{A}^I_{t+1}\alpha_{t+1} C_{t+1}}$ and an identity channel between $\H^{C_t}$ and $\H^{C_t'}$, which effectively relabels $\H^{C_t'}$ to $\H^{C_t}$ in $\mathsf{\tilde{V}}_{t+1}$. 
Finally, because the spaces $\H^{\tilde{A}^I_t}$ and $\H^{\tilde{A}^O_t}$ are isomorphic to $\H^{A^I_t}$ and $\H^{A^O_t}$, we have that 
\begin{align}
	\mathsf{S}^{\text{QC}}(\mathsf{U}^{(1)},\dots,\mathsf{U}^{(T)}) &= \mathsf{V}'_{T+1}* \mathsf{U}^{(T)} * \dots *\mathsf{V}'_2 * \mathsf{U}^{(1)} *\mathsf{V}'_1 \\
 &= \mathsf{S}^{\text{FO}}(\mathsf{U}^{(1)}, \dots, \mathsf{U}^{(T)}),     
\end{align}
as desired.
\end{proof}

One way to understand more intuitively the FO-supermap $\S^\text{FO}$ defined above from a $\S^\text{QC}$ is to return to the expression~\eqref{eq:WfromProcessVector} writing the process matrix $W^\text{QC}$ of $\S^\text{QC}$ as $W^\text{QC}=\Tr_{\alpha_F}\!\ketbra*{w_{(\T,F)}}{w_{(\T,F)}}$ where $\ket{w_{(\T,F)}}=\sum_{(k_1,\dots,k_T)}\ket*{w_{(k_1\dots,k_T,F)}}$ with $\ket*{w_{(k_1\dots,k_T,F)}}$ defined as in Eq.~\eqref{eq:processVector}.
The process matrix $W^\text{FO}$ of $\S^\text{FO}$ can be then seen to be given as $W^\text{FO}=\Tr_{\alpha_F}\!\ketbra*{w'_{(\T,F)}}{w'_{(\T,F)}}$
where $\ket*{w'_{(\T,F)}}=\sum_{(k_1,\dots,k_T)}\ket*{w'_{(k_1\dots,k_T,F)}}$ and where the vectors $\ket*{w'_{(k_1\dots,k_T,F)}}$ are the same as $\ket*{w_{(k_1\dots,k_T,F)}}$ but with each space $\H^{A^{IO}_{k_t}}$ now relabelled as $\H^{A^{IO}_t}$ so that each $\ket*{w'_{(k_1\dots,k_T,F)}}$ corresponds to the fixed order $(1,\dots,T)$ instead of $(k_1,\dots,k_T)$.

Note that while the implementation of a QC-supermap is not unique (i.e., different choices of internal operations $V_{\K_{t-1},k_t}^{\to k_{t+1}}$ may give the same process matrix $W^\text{QC}$), the process matrix $W^\text{FO}$ obtained from the above mapping in general depends on the implementation one takes.
Consider for example the QC-supermap with process matrix $W=\ketbra{w}{w}$ where 
\begin{equation}
    \ket{w} = \ket{\psi_1}^{A^I_1}\ket{\psi_2}^{A^I_2} \ket{\id}^{A^O_1F_1}\ket{\id}^{A^O_2 F_2},
\end{equation}
with $F:=F_1F_2$ and implicit tensor products.
This supermap is compatible with both possible fixed orders, $(1,2)$ and $(2,1)$ (i.e., it can be implemented in parallel).
Two non-trivial implementations as QC-supermaps can be obtained by taking $\dket{\bar{V}_{\emptyset,\emptyset}^{\to 1}}=\ket{\psi_1}^{A^I_1}$, $\dket{\bar{V}_{\emptyset,1}^{\to 2}}=\dket{\id}^{A^O_1\alpha}\ket{\psi_2}^{A_2^I}$ and $\dket{\bar{V}_{\{1\},2}^{\to F}}=\dket{\id}^{\alpha F_1}\dket{\id}^{A_2^O F_2}$ or $\dket{\bar{\bar{V}}_{\emptyset,\emptyset}^{\to 2}}=\ket{\psi_2}^{A^I_2}$, $\dket{\bar{\bar{V}}_{\emptyset,2}^{\to 1}}=\dket{\id}^{A^O_2\alpha}\ket{\psi_1}^{A_1^I}$ and $\dket{\bar{\bar{V}}_{\{2\},1}^{\to F}}=\dket{\id}^{\alpha F_2}\dket{\id}^{A_1^O F_1}$, which gives
\begin{align}
	\ket{\bar{w}} &= \ket{w_{(1,2,F)}} = \dket{\bar{V}_{\emptyset,\emptyset}^{\to 1}} * \dket{\bar{V}_{\emptyset,1}^{\to 2}} * \dket{\bar{V}_{\{1\},2}^{\to F}}\\
	\ket{\bar{\bar{w}}} &= \ket{w_{(2,1,F)}} = \dket{\bar{\bar{V}}_{\emptyset,\emptyset}^{\to 2}} * \dket{\bar{\bar{V}}_{\emptyset,2}^{\to 1}} * \dket{\bar{\bar{V}}_{\{2\},1}^{\to F}}.
\end{align}
One can immediately verify that $\ket{w}=\ket{\bar{w}}=\ket{\bar{\bar{w}}}$, so that these implementations define the same supermap.
However, applying the mapping used in the proof of Theorem~\ref{th:QC-S} to these two implementations, one obtains two FO-supermaps defined by the process vectors
\begin{align}
	\ket{\bar{w}'} &= \ket{\psi_1}^{A^I_1}\ket{\psi_2}^{A^I_2} \ket{\id}^{A^O_1F_1}\ket{\id}^{A^O_2 F_2}\\
	\ket{\bar{\bar{w}}'} &= \ket{\psi_2}^{A^I_1}\ket{\psi_1}^{A^I_2} \ket{\id}^{A^O_1F_2}\ket{\id}^{A^O_2F_1},
\end{align}
which define two distinct supermaps that, nonetheless, indeed will always have the same action (as each other and the original QC-supermap $W^\text{QC}$) when applied to two copies of the same unitary channel.


\section{Proof of Theorem~\ref{th:bound_process}}
\label{appendix:TheoremPol}

In this appendix, we prove a generalisation of the polynomial bound from FO-supermaps (or quantum circuits) to general supermaps.
\boundProcess*
\noindent
The proof is similar to that of the original statement of the polynomial bound for FO-supermaps~\cite{beals_quantum_2001}. Let us first prove the following lemma.
\begin{lemma}
	\label{lemma}
For $x \in \{0, 1\}^n$, $\dket{O_x^{\otimes T}}$ is a vector whose coefficients are multivariate polynomials in $x$ of degree at most $T$.
\end{lemma}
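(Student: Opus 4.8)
The plan is to reduce the claim to the single-oracle case and then exploit the multiplicativity of the Choi vector under tensor products. First I would compute the Choi vector of a single phase oracle. From the definition of the Choi vector of an isometry, $\dket{O_x}=\sum_{i=0}^n \ket{i}\otimes O_x\ket{i}$, and substituting the oracle~\eqref{eq:phase} together with the identity $(-1)^{x_i}=1-2x_i$ valid for $x_i\in\{0,1\}$, I obtain
\[
\dket{O_x}=\ket{0}\ket{0}+\sum_{i=1}^n (1-2x_i)\,\ket{i}\ket{i}.
\]
Every coefficient of $\dket{O_x}$ in the computational basis is therefore either $0$, $1$, or $1-2x_i$, so in all cases it is a multivariate polynomial in $x$ of degree at most $1$.

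Next I would invoke the fact that the Choi vector of a tensor product of isometries factorises as the tensor product of the individual Choi vectors, up to a fixed reordering of the tensor factors that places each input space next to its corresponding output space. Concretely, $\dket{O_x^{\otimes T}}$ equals, up to this reordering, $\dket{O_x^{(1)}}\otimes\cdots\otimes\dket{O_x^{(T)}}$. This is immediate from the definition of the Choi vector and the identity $O_x^{\otimes T}\ket{i_1\cdots i_T}=\bigotimes_{t} O_x\ket{i_t}$, which shows the amplitudes of the joint Choi vector are products of the single-copy amplitudes.

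The conclusion then follows by inspection: each coefficient of the tensor product is a product of $T$ coefficients, one drawn from each factor $\dket{O_x^{(t)}}$, and each such factor is a polynomial of degree at most $1$ by the first step. A product of $T$ polynomials of degree at most $1$ is a polynomial of degree at most $T$, which establishes the lemma. Note that this degree bound is exactly what feeds into the main argument of Theorem~\ref{th:bound_process}, where one squares to get the Choi \emph{matrix} $\mathsf{O}_x^{\otimes T}=\dketbra{O_x^{\otimes T}}{O_x^{\otimes T}}$ with entries of degree at most $2T$.

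I expect the only point requiring care to be the bookkeeping of the tensor-factor reordering relating $\dket{O_x^{\otimes T}}$ to $\dket{O_x^{(1)}}\otimes\cdots\otimes\dket{O_x^{(T)}}$. Since reordering tensor factors merely permutes the basis labels and does not alter the set of coefficient values, it cannot change any polynomial degree; this is thus a cosmetic rather than a substantive obstacle, and the whole argument is essentially a direct computation.
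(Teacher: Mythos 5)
Your proof is correct and follows essentially the same route as the paper's: compute $\dket{O_x}$ explicitly to see its coefficients are affine in $x$, then use the factorisation of the Choi vector of $O_x^{\otimes T}$ into a (reordered) tensor product of single-copy Choi vectors so that each coefficient is a product of $T$ degree-$1$ polynomials. The paper phrases the factorisation step as an induction on $T$ rather than a direct $T$-fold product, and is slightly more cavalier about the $i=0$ basis state, but these are cosmetic differences.
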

\begin{proof}
	We proceed by induction.
If $T = 1$ we have
\begin{equation}
    \dket{O_x} = \sum_i \ket{i} \otimes O_x \ket{i} = \sum_i (-1)^{x_i} \ket{i} \otimes \ket{i} = \sum_i (1 - 2x_i) \ket{i} \otimes \ket{i},
\end{equation}
which is indeed a multivariate polynomial of degree 1.
Let us note that, for any $T\ge 1$, $\dket{O_x^{\otimes T}}$ is of the form $\dket{O_x^{\otimes T}}=\sum_z \alpha_z(x)\ket{z}\otimes\ket{z}$.
Now let us suppose, for some $T$, that the coefficients $\alpha_z(x)$ are multivariate polynomials of degree at most $T$.
Then
\begin{align}
  \dket{O_x^{\otimes (T+1)}} &=\dket{O_x ^{\otimes T}} \otimes \dket{O_x}\\ &= \sum_z \alpha_z(x) \ket{z} \otimes \ket{z} \otimes \sum_i (1 - 2x_i) \ket{i} \otimes \ket{i} \\
  &= \sum_{z,i} (1 - 2x_i)\alpha_z(x) \ket{z}\otimes\ket{z}\otimes\ket{i}\otimes\ket{i}.
\end{align}
But $(1-2x_i)\alpha_z(x)$ is then a multivariate polynomial of degree at most $T+1$,
which concludes the proof of the lemma.
\end{proof}

The proof of the theorem is now straightforward. Let $f$ be a Boolean function on $n$ bits and $\S$ a supermap with trivial input space and a qubit output space characterised by a process matrix $W$. Let us define, for $x \in \{0, 1\}^n$,
\begin{equation}
    g(x)
    = \Tr[(\mathsf{O}_x^{\otimes T} * W) \cdot \Pi_{1}],
\end{equation}
where $\Pi_1 = \ketbra{1}$.
The function $g$ corresponds to the probability of obtaining the outcome $1$ when measuring the qubit $\mathsf{S}(\mathsf{O}_x^{(1)}, \dots, \mathsf{O}_x^{(T)})$ in the computational basis (cf.\ Eq.~\eqref{eq:computation}). 
Now, it follows from Lemma~\ref{lemma} that $\mathsf{O}_x^{\otimes T} = \dketbra{O_x^{\otimes T}}{O_x^{\otimes T}}$ is a matrix whose coefficient are multivariate polynomials in $x$ of degree at most $2T$, and, since the trace and link product are linear, $g(x)$ is also a multivariate polynomial in $x$ of degree at most $2T$.
If $\mathcal{S}$ computes $f$ then $g(x)=f(x)$ for all $x$, so that $g$ represents $f$, and hence $2Q^{\text{Gen}}_E(f) \ge \deg(f)$.
Similarly, in the bounded error case, if $\mathcal{S}$ computes $f$ with bounded error $\varepsilon=1/3$, then $|g(x)-f(x)|\le 1/3$ for all $x$ and $g$ approximates $f$ with bounded error $\varepsilon=1/3$ also, from which we likewise obtain $2Q^{\text{Gen}}_2(f) \ge \widetilde{\deg}(f)$.


\section{Dual form of the SDP}
\label{Appendix:dual}

In this appendix, we derive the dual form of the primal SDP~\eqref{sdp} computing the minimum error $\varepsilon_T^{\mathcal{C}}(f)$ with which $f$ can be computed by a $T$ query supermap in the class $\mathcal{C}$. 

The dual SDP will be used in Appendix~\ref{appendix:extraction} to convert the numerical solutions of the SDP~\eqref{sdp} into analytic proofs that $\varepsilon_T^{\mathcal{C}}(f)$ lies within a given interval. 
We obtain the dual form using the Lagrangian method and some techniques previously used in Ref.~\cite{PhysRevLett.127.200504} for a slightly different SDP. 

First, for any class $\mathcal{C} \in \{\text{FO, QC, Gen}\}$, by identifying $\L(H^{PA^{IO}_\mathcal{T}F})$ with its dual space, one can define the dual affine space of $W^{\mathcal{C}} \subseteq \L(H^{PA^{IO}_\mathcal{T}F})$ as
\begin{equation}
\label{eq:characDual}
	\overline{\mathcal{W}}^\mathcal{C} = \big\{ \overline{W} \in \L(H^{PA^{IO}_\mathcal{T}F}) \, : \, \Tr[\overline{W} \cdot {W}] = 1 \ \ \forall\, W \in \mathcal{W}^{\mathcal{C}} \big\}.
\end{equation}
Note that the elements $\overline{W}$ of $\overline{\W}^\C$ are normalised such that $\Tr\overline{W} = d^I$, with $d^I = d^F \prod_{k=1}^T d^I_k$ (where $d^F = \dim(\H^F)$ and $d^I_k = \dim(\H^{A^I_k})$). For finite dimensional spaces (as is the case here), because $\W^{\mathcal{C}}$ is an affine set we have $\overline{\overline{\W}^{\mathcal{C}}} = \W^{\mathcal{C}}$.
Taking $\overline{\W}^{\mathcal{C}}_B=\{\overline{W}_k\}_k$ to be an affine basis of $\overline{\W}^\C$, we can thus write $\W^\mathcal{C}$ as
\begin{equation}\label{eq:characDualWithBasis}
	\mathcal{W}^\mathcal{C} = \big\{ W \in \L(H^{PA^{IO}_\mathcal{T}F}) \, : \, \Tr[W \cdot \overline{W}_k] = 1 \ \ \forall\, \overline{W}_k \in \overline{\W}^{\mathcal{C}}_B \big\}.
\end{equation}
Using this characterisation of $\W^\mathcal{C}$ and writing the primal \eqref{sdp} in a different but equivalent form (see~\cite{boyd2004convex}, p.\ 264) one obtains the following canonical form of the primal SDP, 
\begin{equation}
\label{eq:newPrimal}
\begin{split}
 \min_{\varepsilon, W^{[0]}, W^{[1]}}\ &\ - (1 - \varepsilon), \\
\text{s.t.} \ &\  \forall x \in F^{[0]}, \ 1 - \varepsilon - \Tr[W^{[0]} \mathsf{O}_x^{\otimes T}] \leq 0, \\
       &\ \forall x \in F^{[1]}, \ 1 - \varepsilon - \Tr[W^{[1]} \mathsf{O}_x^{\otimes T}] \leq 0, \\
       &\ - W^{[0]} \leq 0, \ - W^{[1]} \leq 0, \ - \varepsilon \leq 0, \\
       &\ \Tr[(W^{[0]} + W^{[1]}) \overline{W}_k] = 1, \ \forall \overline{W}_k \in \overline{\mathcal{W}}_B^{\mathcal{C}}.
\end{split}
\end{equation}
To write the Lagrangian of \eqref{eq:newPrimal} we introduce, for $i=0,1$ and for all $x \in F^{[i]}$, the scalar dual variables $\lambda_x^{[i]} \geq 0$, and for each basis element $\overline{W}_k \in \overline{\W}^{\mathcal{C}}_B$, the scalar dual variable $\mu_k$. 
We also introduce, as slack variables, the scalar $\delta \geq 0$ and two real matrices $\Gamma^{[i]} \geq 0$. 
Writing $\bm{\lambda} := \{\lambda_x^{[i]}\}_{i=0,1; x \in F^{[i]}}$, $\bm{\Gamma} := \{\Gamma^{[i]}\}_{i=0,1}$ and $\bm{\mu} := \{\mu_k\}_k$, the Lagrangian is then a function of the primal, dual, and slack variables given as
\begin{align}
& \mathcal{L}(\varepsilon, W^{[0]}, W^{[1]}, \delta, \bm{\lambda}, \bm{\Gamma}, \bm{\mu})\notag\\ 
&= - (1 - \varepsilon) - \delta \varepsilon - \sum_{i=0}^1 \Tr[W^{[i]} \Gamma^{[i]}] + \sum_{i=0}^1 \sum_{x \in F^{[i]}} \lambda_x^{[i]} \Big(1 - \varepsilon - \Tr[W^{[i]} \mathsf{O}_x^{\otimes T}]\Big) + \sum_k \mu_k \Big( \Tr[(W^{[0]} + W^{[1]})\overline{W}_k] - 1\Big) \notag\\
&= - 1 + \varepsilon \Big(1 - \sum_{i=0}^{1} \sum_{x \in F^{[i]}} \lambda_x^{[i]} - \delta \Big) + \sum_{i=0}^{1} \Tr\Big[ W^{[i]}(-\Gamma^{[i]} - \sum_{x \in F^{[i]}} \lambda_x^{[i]}\mathsf{O}_x^{\otimes T} + \sum_{k}\mu_k \overline{W}_k) \Big] + \sum_{i=0}^{1} \sum_{x \in F^{[i]}} \lambda_x^{[i]} - \sum_k \mu_k.
\end{align}
Because any optimal solution $(\varepsilon^*, W^{[0]*}, W^{[1]*})$ of the primal SDP necessarily satisfies the constraints of \eqref{eq:newPrimal}, for any positive $(\delta, \bm{\lambda}, \bm{\Gamma})$ and any $\bm{\mu}$, we have $\mathcal{L}(\varepsilon^*, W^{[0]*}, W^{[1]*}, \delta, \bm{\lambda}, \bm{\Gamma}, \bm{\mu}) \leq - (1 - \varepsilon^*)$. 
Therefore, by considering the function obtained when minimising over all possible values of $(\varepsilon, W^{[0]}, W^{[1]})$ (even those not corresponding to feasible solutions of \eqref{eq:newPrimal}), one obtains a lower bound on the objective function $-(1 - \varepsilon^*)$ as
\begin{equation}
\label{eq:dual_lower}
g(\delta, \bm{\lambda}, \bm{\Gamma}, \bm{\mu}) = \min_{\varepsilon, W^{[0]}, W^{[1]}}\, \mathcal{L}(\varepsilon, W^{[0]}, W^{[1]}, \delta, \bm{\lambda}, \bm{\Gamma}, \bm{\mu}) \leq -(1 - \varepsilon^*).
\end{equation}
To obtain non-trivial lower bounds, i.e., finite values of $g$, certain conditions must be satisfied, as $g$ can be rewritten
\begin{equation}
g(\delta, \bm{\lambda}, \bm{\Gamma}, \bm{\mu}) = \begin{cases} \sum_{i} \sum_{x \in F^{[i]}} \lambda_x^{[i]} - 1 - \sum_k \mu_k & \text{if} 
\begin{cases} 1 - \sum_{i} \sum_{x \in F^{[i]}} \lambda_x^{[i]} - \delta = 0, \quad \text{and} \\  - \Gamma^{[i]} - \sum_{x \in F^{[i]}}\lambda^{[i]}_x \mathsf{O}_x^{\otimes T} + \sum_k \mu_k \overline{W}_k = 0, \, i \in \{0, 1\}, 
\end{cases} \\
- \infty & \text{otherwise.}
\end{cases}
\end{equation}

Then, the solution to the primal SDP $\varepsilon_T^{\mathcal{C}}(f)$ can be alternatively obtained by maximising this lower bound, which can be cast as the optimisation problem 
\begin{equation}
\begin{split}
 \max_{\delta, \bm{\lambda}, \bm{\Gamma}, \bm{\mu}} \ & \ \sum_{i=0}^{1} \sum_{x \in F^{[i]}} \lambda_x^{[i]} - 1 - \sum_k \mu_k, \\
\text{s.t.} \ & \sum_{i=0}^1 \sum_{x \in F^{[i]}} \lambda_x^{[i]} + \delta = 1, \\
       &  \Gamma^{[i]} + \sum_{x \in F^{[i]}} \lambda_x^{[i]} \mathsf{O}_x^{\otimes T} = \sum_k \mu_k \overline{W}_k, \, i \in \{0, 1\}, \\
       & \bm{\lambda} \geq 0,\ \bm{\Gamma} \geq 0,\ \delta \geq 0,
\end{split}
\end{equation}
where $\bm{\lambda} \geq 0$ means that $\lambda_x^{[i]} \geq 0$ for all $i=0,1$ and $x \in F^{[i]}$, similarly for $\bm{\Gamma} \geq 0$. 
This can be simplified through the removal of the slack variables $\bm{\Gamma}$ and $\delta$ to obtain
\begin{equation}
	\label{eq:sdp_dual_without_slack}
\begin{split}
 \min_{\bm{\lambda}, \bm{\mu}} \ & \ 1 + \sum_k \mu_k - \sum_{i=0}^{1} \sum_{x \in F^{[i]}} \lambda_x^{[i]}, \\
\text{s.t.} \ & \sum_{i=0}^1 \sum_{x \in F^{[i]}} \lambda_x^{[i]} \leq 1, \\
       & \sum_{x \in F^{[i]}} \lambda_x^{[i]} \mathsf{O}_x^{\otimes T} \leq \sum_k \mu_k \overline{W}_k, \, i \in \{0, 1\}, \\
       & \bm{\lambda} \geq 0.
\end{split}
\end{equation}
This SDP can be further simplified by writing $\nu = \sum_k \mu_k$ and $\overline{W} = \frac{1}{\nu}\sum_k \mu_k \overline{W}_k$, where $\overline{W}\in\overline{\mathcal{\W}}^\mathcal{C}$ since $\overline{\W}_B^{\mathcal{C}}=\{\overline{W}_k\}_k$ is an affine basis of this space.%
\footnote{One can easily check that $\nu \ge \sum_{x\in F^{[i]}}{\lambda_{x}^{[i]}} \ge 0$ by taking the trace of each side of the second constraint in~\eqref{eq:sdp_dual_without_slack} and recalling that $\bm{\lambda}\ge 0$. The case where $\nu = 0$ therefore corresponds to the trivial solution to the SDP. One can readily check that this trivial solution is also a feasible solution to~\eqref{eq:sdp_dual_nu} so that these two formulations of the problem are indeed equivalent.}
One then obtains
\begin{equation}
	\label{eq:sdp_dual_nu}
\begin{split}
\min_{\bm{\lambda},\nu, \overline{W}} \ & \ 1 +\nu - \sum_{i=0}^{1} \sum_{x \in F^{[i]}} \lambda_x^{[i]}, \\
\text{s.t.} \ & \sum_{i=0}^1 \sum_{x \in F^{[i]}} \lambda_x^{[i]}  \leq 1, \\
       & \sum_{x \in F^{[i]}} \lambda_x^{[i]} \mathsf{O}_x^{\otimes T} \leq\nu \overline{W}, \, i \in \{0, 1\}, \\
       & \bm{\lambda} \geq 0\\
       & \overline{W} \in \overline{\mathcal{W}}^\mathcal{C}.
\end{split}
\end{equation}
The above optimisation problem is no longer an SDP due to the nonlinear term $\nu\overline{W}$. 
However, one can absorb $\nu$ into the unnormalised operator $\overline{W}_* =\nu \overline{W}$, which then satisfies $\frac{1}{d^I}\Tr\overline{W}_* =\nu$.
Writing the cone generated by $\overline{\W}^\mathcal{C}$ as $\overline{\mathcal{W}}_*^\mathcal{C} = \{\nu \overline{W}\, :\, \nu \ge 0,\ \overline{W} \in \overline{\W}^\C \}$, one then obtains the SDP
\begin{equation}
\label{sdp_dual}
\begin{split}
\min_{\bm{\lambda}, \overline{W}_*} \ & \ 1 + \frac{1}{d^I} \Tr\overline{W}_* - \sum_{i=0}^{1} \sum_{x \in F^{[i]}} \lambda_x^{[i]}, \\
\text{s.t.} \ & \sum_{i=0}^1 \sum_{x \in F^{[i]}} \lambda_x^{[i]}  \leq 1, \\
       & \sum_{x \in F^{[i]}} \lambda_x^{[i]} \mathsf{O}_x^{\otimes T} \leq \overline{W}_*, \, i \in \{0, 1\}, \\
       & \bm{\lambda} \geq 0, \\
       & \overline{W}_* \in \overline{\W}_*^\mathcal{C},
\end{split}
\end{equation}
which is the dual of~\eqref{eq:newPrimal}.
This dual SDP minimises an objective function whose optimal value is $1 - \varepsilon_T^{\mathcal{C}}(f)$, and so for any solution of \eqref{sdp_dual} we have $\sum_{i=0}^{1} \sum_{x \in F^{[i]}} \lambda_x^{[i]} - \frac{1}{d^I} \Tr\overline{W}_* \leq \varepsilon_T^{\mathcal{C}}(f)$.
That is, any feasible solution to \eqref{sdp_dual} provides a lower bound on $\varepsilon_T^{\mathcal{C}}(f)$, while any feasible solution to the primal SDP \eqref{sdp} (or, equivalently, \eqref{eq:newPrimal}) provides an upper bound on $\varepsilon_T^{\mathcal{C}}(f)$.

Finally, let us give a characterisation of $\overline{\mathcal{W}}_*^\mathcal{C}$ for $\mathcal{C}\in\{\text{FO},\text{Gen}\}$.
Recall that $\overline{\mathcal{W}}^\mathcal{C}_*$ is the cone generated by the dual affine space $\overline{\mathcal{W}}^\mathcal{C}$ defined in Eq.~\eqref{eq:characDual}, and which, since $\Tr\overline{W}=d^I$ for $\overline{W}\in\overline{\W}^\mathcal{C}$, can be rewritten as
\begin{equation}
\label{eq:characDualCone}
	\overline{\mathcal{W}}_*^\mathcal{C} = \left\{ \overline{W}_* \in \L(H^{PA^{IO}_\mathcal{T}F}) \, : \, \Tr[\overline{W}_* \cdot {W}] = \frac{\Tr\overline{W}_*}{d^I},  \ \forall\, W \in \mathcal{W}^{\mathcal{C}} \right\}.
\end{equation}
Recalling also that any $W \in \mathcal{W}^\mathcal{C}$ is normalised so that $\Tr[\frac{1}{d^O} W] = 1$, where $d^O = d^P \prod_{k=1}^T d^O_k$ (see Appendix~\ref{annex:Gen}), we then obtain
\begin{equation}
	\overline{\mathcal{W}}_*^\mathcal{C} = \left\{ \overline{W}_* \in \L(H^{PA^{IO}_\mathcal{T}F}) \, : \, \Tr[\left(\overline{W}_* - \frac{\Tr(\overline{W}_*)}{d^I d^O}\id \right) \cdot {W}] = 0, \ \forall\, W \in \mathcal{W}^{\mathcal{C}} \right\}.
\end{equation}
This implies that $\overline{W}_* \in \overline{\mathcal{W}}_*^\mathcal{C}$ if and only if $\overline{W}_* - \frac{\Tr(\overline{W}_*)}{d^I d^O}\id$ is in the orthogonal complement of $\mathcal{W}^\mathcal{C}$ within the space of Hermitian operators.
 
Since $W\in\W^\mathcal{C}$ if and only if $W\in \mathcal{P}\cap \mathcal{L}^\mathcal{C}$ and $\Tr W = d^O$, we have $( \W^\mathcal{C})^\perp = \mathcal{P}^\perp + (\mathcal{L}^\mathcal{C})^\perp = (\mathcal{L}^\mathcal{C})^\perp$.
Therefore, taking $\Pi_{\mathcal{L}^\mathcal{C}}$ to be the projector onto $\mathcal{L}^\mathcal{C}$, the projector onto $\overline{\mathcal{W}}^\mathcal{C}_*$ is given by
\begin{align}
    \Pi_{\overline{\mathcal{W}}^\mathcal{C}_*}(\overline{W}_*) &= \overline{W}_* - \Pi_{\mathcal{L}^\mathcal{C}}\left(\overline{W}_* - \frac{\Tr(\overline{W}_*)}{d^I d^O}\id\right) \\
    &= \overline{W}_* - \Pi_{\mathcal{L}^\mathcal{C}}(\overline{W}_*) + \frac{\Tr(\overline{W}_*)}{d^I d^O}\id,
\end{align}
where the second line is obtained from the linearity of $\Pi_{\mathcal{L}^\mathcal{C}}$ and because $ \id \in \mathcal{L}^\mathcal{C}$. 
We thus have the characterisation
\begin{equation}
	\overline{\W}_*^\mathcal{C} = \{ \overline{W}_* \in \L(H^{PA^{IO}_\mathcal{T}F}) \, : \, \Pi_{\overline{\mathcal{W}}^\mathcal{C}_*}(\overline{W}_*) = \overline{W}_* \}.
\end{equation}


\section{Mathematical proofs from numerical results}
\label{appendix:extraction}

In this appendix, we describe two algorithms to extract strictly feasible solutions to the primal SDP~\eqref{sdp} and its dual~\eqref{sdp_dual} from numerical solutions to these SDPs.
This will allow us to extract analytical bounds on $\varepsilon_T^{\mathcal{C}}(f)$ and provide a rigorous mathematical proof of Theorem~\ref{theorem:gap}.

The nature of numerical SDP solvers means that the solutions they provide only satisfy the constraints of the SDP up to some numerical precision.
Since, strictly speaking, the constraints are not satisfied, we cannot directly conclude anything about the precision of the numerical result of the optimisation problem.
In Ref.~\cite{PhysRevLett.127.200504}, the authors propose an algorithm that takes a numerical (floating point), approximate solution of an SDP and returns an exact rational solution that is not specified with floating point numbers and rigorously satisfies the constraints of the SDP. 
While this alternative solution is no longer guaranteed to be optimal, it gives an upper or lower bound (depending on whether one maximises or minimises the objective function). 
By applying this approach to both the primal and dual forms of an SDP, one can obtain both lower and upper bounds on the true optimal solutions; i.e., an interval within which that solution is certified to lie.
Here we adapt the algorithm described in Ref.~\cite{PhysRevLett.127.200504} to our specific problem.

We start with an algorithm to extract an exact solution from the dual SDP~\eqref{sdp_dual} specified in Appendix~\ref{Appendix:dual}.
For a Boolean function $f$, this will provide a lower bound on $\varepsilon_T^{\mathcal{C}}(f)$, i.e., the minimal error with which one can compute $f$ using a fixed number $T$ of queries. 
A solution of the SDP~\eqref{sdp_dual} consists of a tuple $(\lambda^{[0]}, \lambda^{[1]}, S)$, where for $i=0, 1$, $\lambda^{[i]} = \{\lambda_x^{[i]} \}_{x \in F^{[i]}}$ and $S$ is a matrix in the cone $\overline{\W}^{\mathcal{C}}_*$ generated by the dual affine space $\overline{\W}^\mathcal{C}$ (see Appendix~\ref{Appendix:dual} and Ref.~\cite{PhysRevLett.127.200504}). 
The output of the following algorithm will be a tuple $(\lambda^{[0]}_{\text{final}}, \lambda^{[1]}_{\text{final}}, S_\text{final})$ with $\lambda^{[i]}_{\text{final}} = \{\lambda_{x,\text{final}}^{[i]}\}_{x \in F^{[i]}}$, which rigorously satisfies the constraints of the SDP~\eqref{sdp_dual}. 
Note that the output of the algorithm depends on the precision to which one rationalises the variables, which is a freely chosen parameter of the algorithm.
\begin{algorithm}[H] 
\caption{
}
\label{algorithm:dual}
\begin{enumerate}
    \item Define each $\lambda_{x,\text{frac}}^{[i]}$ as a rationalisation of $\lambda_x^{[i]}$, stored in an exact representation and where any negative $\lambda_x^{[i]}$ are set to 0.

    \item Define $\delta = 1 - \sum_{i=0}^1 \sum_{x \in F^{[i]}} \lambda_{x, \text{frac}}^{[i]}$ and let 
$
\lambda_{x, \text{final}}^{[i]} = \begin{cases} \lambda_{x, \text{frac}}^{[i]} + \delta/N & \text{if } \delta < 0 \text{ and } \lambda_{x, \text{frac}}^{[i]} \geq -\delta/N,   \\
\lambda_{x, \text{frac}}^{[i]} & \text{otherwise,} \end{cases}$
where $N\le 2^n$ is the maximal number of pairs $(i,x)$ such that $\lambda_{x,\text{frac}}^{[i]}\ge -\delta/N$.

	\item Define $S_{\text{frac}}$ as a rationalisation of $S$, stored in an exact representation.

	\item Define $S_{\text{Herm}} = \frac{1}{2} (S_{\text{frac}} + S_{\text{frac}}^\dagger)$.

	\item Define $S_{\text{valid}} = \Pi_{\overline{\W}^\mathcal{C}_*}(S_{\text{Herm}})$, where $\Pi_{\overline{\W}^\mathcal{C}_*}$ is the projection onto $\overline{\W}^{\mathcal{C}}_*$.

	\item Find the smallest $\mu$ such that $S_{\text{pos}} = D_{\mu}(S_\text{valid})$ is positive semidefinite, where  $D_{\mu}(S) = (1-\mu) S + \mu \id$.

	\item For $i=0, 1$, find the smallest non-negative $\eta_i$'s  such that $F_{\eta_i} (S_{\text{valid}})$ is positive semidefinite, where $F_{\eta_i}(S) = S + (\eta_i -1)\mathsf{O}^{[i]}$ with $\mathsf{O}^{[i]} = \sum_{x\in F^{[i]}} \lambda_{x, \text{final}}^{[i]} O_x^{\otimes T}$, and define $S_{\text{final}} = S + \eta_0 \mathsf{O}^{[0]} + \eta_1 \mathsf{O}^{[1]}$.
\end{enumerate}
\end{algorithm}

The first and second steps of the algorithm ensure that the final lambdas $\lambda_{x, \text{final}}^{[i]}$ are rational (stored, for example, as fractions that can be manipulated exactly), non-negative, and satisfy the constraints $\sum_{i=0}^1 \sum_{x \in F^{[i]}} \lambda_{x, \text{final}}^{[i]} \leq 1$. 
Steps 3 to 5 ensure that the matrix $\mathcal{S}_{\text{final}}$ is rational, Hermitian and belongs to $\overline{\W}^{\mathcal{C}}_*$.
Finally, Steps 6 and 7 ensure that, for $i \in \{0,1\}$, $S_{\text{final}} - \mathsf{O}^{[i]}$ is positive semidefinite.
Note that Step 6 is needed to ensure that such $\eta_i$ exist in Step 7, and that the minimisation in Step 7 typically needs to be performed numerically and thus approximately. This is not an issue, as one simply needs to find the smallest $\eta_i$ up to some desired precision that make the respective matrices $F_{\eta_i}(S)$ strictly positive semidefinite. 
Note as well that after Step 7, the matrix $S_{\text{final}}$ remains in $\overline{\W}^{\mathcal{C}}_*$, which follows from the linearity of $\Pi_{\overline{\W}^{\mathcal{C}}_*}$ and because, for any class $\mathcal{C} \in \{\text{FO, Gen}\}$, the $\mathsf{O}^{[i]}$'s are elements of $\overline{\W}^{\mathcal{C}}_*$ \cite{PhysRevLett.127.200504}.
The output of the algorithm $S_{\text{final}}$ is hence guaranteed to satisfy all the constraints of the dual SDP~\eqref{sdp_dual}. 
Furthermore, as stated in Appendix~\ref{Appendix:dual}, the dual SDP~\eqref{sdp_dual} minimises an objective function whose optimal is $1 - \varepsilon_T^\mathcal{C}(f)$, meaning that the output of Algorithm~\eqref{algorithm:dual} will provide a lower bound on $\varepsilon_T^\mathcal{C}(f)$, i.e.,
\begin{equation}
\sum_{i=0}^1 \sum_{x \in F^{[i]}} \lambda_{x, \text{final}}^{[i]} -  \frac{1}{d^I} \Tr[S_{\text{final}}] \leq \varepsilon^{C}_{T}(f). 
\end{equation}

The corresponding upper bound is obtained from a similar algorithm whose input is a solution of the primal SDP \eqref{sdp}, which is a superinstrument of two elements $\{W^{[i]}\}_{i=0,1}$ such that $W^{[0]} + W^{[1]} \in \mathcal{W}^{\mathcal{C}}$. Then, the output of the following algorithm will be a superinstrument $\{W_{\text{final}}^{[i]}\}_{i=0,1}$ that rigorously satisfies all the constraints of the primal SDP~\eqref{sdp}.

\begin{algorithm}[H]
    \caption{
    }
    \label{algorithm:primal}
    \begin{enumerate}
    \item Define $W^{[0]}_{\text{frac}}$ and  $W^{[1]}_{\text{frac}}$ as rationalisations of $W^{[0]}$ and $W^{[1]}$, stored in an exact representation.
    
\item For $i=0, 1$, define $W^{[i]}_{\text{Herm}} = \frac{1}{2} (W^{[i]}_{\text{frac}} + W_{\text{frac}}^{[i]\dagger})$.

\item  Define $W_{\text{proj}} = \Pi_{\L^\mathcal{C}}(W^{[0]}_{\text{Herm}} + W^{[1]}_{\text{Herm}})$, where $\Pi_{\L^\mathcal{C}}$ is the projection onto $\L^\mathcal{C}$.

\item  Define $W_{\text{corr}} = W_{\text{proj}} - W^{[0]}_{\text{Herm}} - W^{[1]}_{\text{Herm}}$.

\item  Define for $i=0,1$, $W^{[i]}_{\text{valid}} = W^{[i]}_{\text{Herm}} + \frac{1}{2}W_{\text{corr}}$.

\item  Find the smallest $\mu$ such that  for $i=0, 1$, $W^{[i]}_{\text{pos}} = D_{\mu} (W^{[i]}_\text{valid})$ is positive semidefinite, where $D_{\mu} (W) = (1-\mu) W + \mu \id$.

\item  Re-normalise, for $i=0,1$, $W_{\text{final}}^{[i]} = \frac{d^P \prod_{k=1}^T d^O_k}{\Tr[W^{[0]}_{\text{pos}} + W^{[i]}_{\text{pos}}]}W^{[i]}_{\text{pos}}$.
    \end{enumerate}
\end{algorithm}
The first and second steps of the algorithm make sure that the superinstrument elements $W_{\text{final}}^{[i]}$ are rational and Hermitian. 
Steps 3 to 5 ensure that their sum lies in the subspace $\L^{\mathcal{C}}$. 
Step 6 ensures that they are positive semidefinite, while Step 7 ensures that they are properly normalised. 
The output of the algorithm $\{W_{\text{final}}^{[i]}\}_{i\in\{0, 1\}}$ is then a superinstrument that satisfies all the constraints of the primal SDP~\eqref{sdp}. Because this SDP is a maximisation of the objective function $1 - \varepsilon$ with optimal value $1 - \varepsilon_T^{\mathcal{C}}(f)$, the output of Algorithm~\eqref{algorithm:primal} provides an upper bound on $\varepsilon_T^{\mathcal{C}}(f)$, i.e.,
\begin{equation}
\varepsilon^{C}_{T}(f) \leq \min_{i \in \{0, 1\}} \min_{x \in F^{[i]}} \Tr[W_{\text{final}}^{[i]} \mathsf{O}_x^{\otimes T}].
\end{equation}


\section{Numerical results}
\label{app:numerical}

In Table \ref{table:results} we summarise the numerical results obtained by solving the semidefinite program \eqref{sdp} for $n=4$ and $T=2$. The ID of a function $f$ corresponds to the integer obtained from its binary truth table, $\varepsilon_2^{\text{FO}}(f)$ is the numerical value corresponding to the minimum probability of error for which it can be computed using two queries with FO-supermaps and $\varepsilon_2^{\text{Gen}}(f)$ with general supermaps.
Table \ref{table:results} lists the 222 NPN representatives of Boolean functions on 4 input bits (note that it also contains functions that are constants or whose output depends on less than 4 input bits). 
For $\varepsilon_2^{\text{FO}}(f)$, our results coincide with those obtained in Ref.~\cite{Montanaro2015}, but we observe that for 179 representatives, $\varepsilon_2^{\text{Gen}}(f) < \varepsilon_2^{\text{FO}}(f)$, with a gap of 0.00947 (close to 1\%) for functions of Id: 5783, 5865 and 6630. 
To solve the SDPs we use the Matlab toolbox \texttt{Yalmip}~\cite{Lofberg2004} with the solver \texttt{SCS} \cite{scs};
our code is freely accessible on Github.%
\footnote{\url{https://github.com/pierrepocreau/QuantumQueryComplexity_ICO}}

\begin{table}
\footnotesize 
\caption{Numerical results for 2 queries and all Boolean functions of 4 input bits.}
\label{table:results}
\hspace*{-10mm}
\begin{minipage}{0.20\textwidth}
\begin{tabular}{|c|c|c|c|}
\hline
ID   & $\varepsilon_2^{\text{FO}}$ & $\varepsilon_2^{\text{Gen}}$ & Gap  \\ 
\hline
0     & 0.00000 & 0.00000 & 0.00000 \\
1     & 0.03846 & 0.03846 & 0.00000 \\
3     & 0.02000 & 0.02000 & 0.00000 \\
6     & 0.06897 & 0.06897 & 0.00000 \\
7     & 0.04620 & 0.04620 & 0.00000 \\
15    & 0.00000 & 0.00000 & 0.00000 \\
22    & 0.09380 & 0.09380 & 0.00000 \\
23    & 0.07409 & 0.07409 & 0.00000 \\
24    & 0.06897 & 0.06897 & 0.00000 \\
25    & 0.03957 & 0.03957 & 0.00000 \\
27    & 0.03475 & 0.03475 & 0.00000 \\
30    & 0.04411 & 0.04411 & 0.00000 \\
31    & 0.03043 & 0.02964 & 0.00080 \\
60    & 0.00000 & 0.00000 & 0.00000 \\
61    & 0.02383 & 0.02000 & 0.00383 \\
63    & 0.00000 & 0.00000 & 0.00000 \\
105   & 0.10000 & 0.10000 & 0.00000 \\
107   & 0.05936 & 0.05919 & 0.00016 \\
111   & 0.03254 & 0.02853 & 0.00400 \\
126   & 0.05263 & 0.05263 & 0.00000 \\
127   & 0.02858 & 0.02858 & 0.00000 \\
255   & 0.00000 & 0.00000 & 0.00000 \\
278   & 0.11611 & 0.11611 & 0.00000 \\
279   & 0.10061 & 0.10061 & 0.00000 \\
280   & 0.09380 & 0.09380 & 0.00000 \\
281   & 0.04411 & 0.04411 & 0.00000 \\
282   & 0.06387 & 0.06387 & 0.00000 \\
283   & 0.04136 & 0.04126 & 0.00010 \\
286   & 0.06842 & 0.06842 & 0.00000 \\
287   & 0.05475 & 0.05475 & 0.00000 \\
300   & 0.06387 & 0.06387 & 0.00000 \\
301   & 0.03637 & 0.03595 & 0.00041 \\
303   & 0.03436 & 0.02900 & 0.00536 \\
316   & 0.03846 & 0.03846 & 0.00000 \\
317   & 0.02000 & 0.02000 & 0.00000 \\
318   & 0.04408 & 0.04051 & 0.00357 \\
319   & 0.02807 & 0.02529 & 0.00279 \\
360   & 0.11611 & 0.11611 & 0.00000 \\
361   & 0.08382 & 0.08382 & 0.00000 \\
362   & 0.06842 & 0.06842 & 0.00000 \\
363   & 0.04510 & 0.04483 & 0.00027 \\
366   & 0.04408 & 0.04051 & 0.00357 \\
367   & 0.03147 & 0.02782 & 0.00364 \\
382   & 0.07739 & 0.07739 & 0.00000 \\
383   & 0.05410 & 0.05410 & 0.00000 \\
384   & 0.02000 & 0.02000 & 0.00000 \\
385   & 0.04620 & 0.04620 & 0.00000 \\
386   & 0.03957 & 0.03957 & 0.00000 \\
387   & 0.03475 & 0.03475 & 0.00000 \\
390   & 0.06387 & 0.06387 & 0.00000 \\
391   & 0.04136 & 0.04126 & 0.00010 \\
393   & 0.03475 & 0.03475 & 0.00000 \\
395   & 0.03679 & 0.03679 & 0.00000 \\
399   & 0.02000 & 0.02000 & 0.00000 \\
406   & 0.08382 & 0.08382 & 0.00000 \\
407   & 0.05985 & 0.05834 & 0.00151 \\
408   & 0.04411 & 0.04411 & 0.00000 \\
\vdots & \vdots & \vdots & \vdots \\ 
\hline
\end{tabular}
\end{minipage}
\hspace{6mm}
\begin{minipage}{0.20\textwidth}
\begin{tabular}{|c|c|c|c|}
\hline
ID   & $\varepsilon_2^{\text{FO}}$ & $\varepsilon_2^{\text{Gen}}$ & Gap \\
\hline
\vdots & \vdots & \vdots & \vdots \\ 
409   & 0.02383 & 0.02000 & 0.00383 \\
410   & 0.03637 & 0.03595 & 0.00041 \\
411   & 0.03079 & 0.03061 & 0.00018 \\
414   & 0.04510 & 0.04483 & 0.00027 \\
415   & 0.03527 & 0.03294 & 0.00233 \\
424   & 0.07409 & 0.07409 & 0.00000 \\
425   & 0.04136 & 0.04126 & 0.00010 \\
426   & 0.03043 & 0.02964 & 0.00080 \\
427   & 0.02000 & 0.02000 & 0.00000 \\
428   & 0.04136 & 0.04126 & 0.00010 \\
429   & 0.03079 & 0.03061 & 0.00018 \\
430   & 0.03436 & 0.02900 & 0.00536 \\
431   & 0.02351 & 0.02304 & 0.00047 \\
444   & 0.02000 & 0.02000 & 0.00000 \\
445   & 0.03499 & 0.03066 & 0.00433 \\
446   & 0.03147 & 0.02782 & 0.00364 \\
447   & 0.02000 & 0.02000 & 0.00000 \\
488   & 0.10061 & 0.10061 & 0.00000 \\
489   & 0.05985 & 0.05834 & 0.00151 \\
490   & 0.05475 & 0.05475 & 0.00000 \\
491   & 0.03527 & 0.03294 & 0.00233 \\
494   & 0.02807 & 0.02529 & 0.00279 \\
495   & 0.02000 & 0.02000 & 0.00000 \\
510   & 0.05410 & 0.05410 & 0.00000 \\
828   & 0.00000 & 0.00000 & 0.00000 \\
829   & 0.02521 & 0.02268 & 0.00253 \\
831   & 0.00000 & 0.00000 & 0.00000 \\
854   & 0.04508 & 0.04399 & 0.00110 \\
855   & 0.03109 & 0.02707 & 0.00402 \\
856   & 0.03637 & 0.03595 & 0.00041 \\
857   & 0.03880 & 0.03880 & 0.00000 \\
858   & 0.02000 & 0.02000 & 0.00000 \\
859   & 0.02652 & 0.02587 & 0.00065 \\
862   & 0.03430 & 0.03065 & 0.00366 \\
863   & 0.01396 & 0.01122 & 0.00274 \\
872   & 0.06842 & 0.06842 & 0.00000 \\
873   & 0.05375 & 0.05059 & 0.00316 \\
874   & 0.04508 & 0.04399 & 0.00110 \\
875   & 0.04884 & 0.04358 & 0.00526 \\
876   & 0.04408 & 0.04051 & 0.00357 \\
877   & 0.03870 & 0.03674 & 0.00196 \\
878   & 0.03430 & 0.03065 & 0.00366 \\
879   & 0.03434 & 0.03279 & 0.00155 \\
892   & 0.02521 & 0.02268 & 0.00253 \\
893   & 0.03399 & 0.03175 & 0.00224 \\
894   & 0.05258 & 0.05063 & 0.00194 \\
960   & 0.00000 & 0.00000 & 0.00000 \\
961   & 0.03043 & 0.02964 & 0.00080 \\
963   & 0.00000 & 0.00000 & 0.00000 \\
965   & 0.02000 & 0.02000 & 0.00000 \\
966   & 0.03436 & 0.02900 & 0.00536 \\
967   & 0.02351 & 0.02304 & 0.00047 \\
975   & 0.00000 & 0.00000 & 0.00000 \\
980   & 0.05475 & 0.05475 & 0.00000 \\
981   & 0.03109 & 0.02707 & 0.00402 \\
\vdots & \vdots & \vdots  & \vdots \\ 
\hline
\end{tabular}
\end{minipage}
\hspace{6mm}
\begin{minipage}{0.20\textwidth}
\begin{tabular}{|c|c|c|c|}
\hline
ID   & $\varepsilon_2^{\text{FO}}$ & $\varepsilon_2^{\text{Gen}}$ & Gap \\
\hline
\vdots & \vdots & \vdots & \vdots \\ 
982   & 0.04884 & 0.04358 & 0.00526 \\
983   & 0.03499 & 0.02990 & 0.00510 \\
984   & 0.03436 & 0.02900 & 0.00536 \\
985   & 0.02652 & 0.02587 & 0.00065 \\
985   & 0.02652 & 0.02587 & 0.00065 \\
987   & 0.03499 & 0.02990 & 0.00510 \\
988   & 0.02807 & 0.02529 & 0.00279 \\
989   & 0.01396 & 0.01122 & 0.00274 \\
990   & 0.03434 & 0.03279 & 0.00155 \\
1020  & 0.00000 & 0.00000 & 0.00000 \\
1632  & 0.00000 & 0.00000 & 0.00000 \\
1633  & 0.03846 & 0.03846 & 0.00000 \\
1634  & 0.02383 & 0.02000 & 0.00383 \\
1635  & 0.02000 & 0.02000 & 0.00000 \\
1638  & 0.05263 & 0.05263 & 0.00000 \\
1639  & 0.03499 & 0.03066 & 0.00433 \\
1641  & 0.06387 & 0.06387 & 0.00000 \\
1643  & 0.03449 & 0.03288 & 0.00162 \\
1647  & 0.00000 & 0.00000 & 0.00000 \\
1650  & 0.03079 & 0.03061 & 0.00018 \\
1651  & 0.02652 & 0.02587 & 0.00065 \\
1654  & 0.03499 & 0.03066 & 0.00433 \\
1656  & 0.04408 & 0.04051 & 0.00357 \\
1657  & 0.03568 & 0.02975 & 0.00593 \\
1658  & 0.03430 & 0.03065 & 0.00366 \\
1659  & 0.03836 & 0.03714 & 0.00122 \\
1662  & 0.04641 & 0.04247 & 0.00394 \\
1680  & 0.10000 & 0.10000 & 0.00000 \\
1681  & 0.08382 & 0.08382 & 0.00000 \\
1683  & 0.05375 & 0.05059 & 0.00316 \\
1686  & 0.06387 & 0.06387 & 0.00000 \\
1687  & 0.03568 & 0.02975 & 0.00593 \\
1695  & 0.00000 & 0.00000 & 0.00000 \\
1712  & 0.05936 & 0.05919 & 0.00016 \\
1713  & 0.04510 & 0.04483 & 0.00027 \\
1714  & 0.05985 & 0.05834 & 0.00151 \\
1715  & 0.04884 & 0.04358 & 0.00526 \\
1716  & 0.04510 & 0.04483 & 0.00027 \\
1717  & 0.03870 & 0.03674 & 0.00196 \\
1718  & 0.03449 & 0.03288 & 0.00162 \\
1719  & 0.03836 & 0.03714 & 0.00122 \\
1721  & 0.03568 & 0.02975 & 0.00593 \\
1725  & 0.04491 & 0.04053 & 0.00438 \\
1776  & 0.03254 & 0.02853 & 0.00400 \\
1777  & 0.03147 & 0.02782 & 0.00364 \\
1778  & 0.03527 & 0.03294 & 0.00233 \\
1782  & 0.00000 & 0.00000 & 0.00000 \\
1785  & 0.00000 & 0.00000 & 0.00000 \\
1910  & 0.04641 & 0.04247 & 0.00394 \\
1912  & 0.07739 & 0.07739 & 0.00000 \\
1913  & 0.04491 & 0.04053 & 0.00438 \\
1914  & 0.05258 & 0.05063 & 0.00194 \\
1918  & 0.07790 & 0.07790 & 0.00000 \\
1968  & 0.03254 & 0.02853 & 0.00400 \\
1969  & 0.03527 & 0.03294 & 0.00233 \\
\vdots & \vdots & \vdots & \vdots \\ 
\hline
\end{tabular}
\end{minipage}
\hspace{7mm}
\begin{minipage}{0.20\textwidth}
\begin{tabular}{|c|c|c|c|}
\hline
ID   & $\varepsilon_2^{\text{FO}}$ & $\varepsilon_2^{\text{Gen}}$ & Gap \\
\hline
\vdots & \vdots & \vdots & \vdots \\ 
1972  & 0.03147 & 0.02782 & 0.00364 \\
1973  & 0.03434 & 0.03279 & 0.00155 \\
1974  & 0.03836 & 0.03714 & 0.00122 \\
1980  & 0.03399 & 0.03175 & 0.00224 \\
2016  & 0.00000 & 0.00000 & 0.00000 \\
2017  & 0.02807 & 0.02529 & 0.00279 \\
2018  & 0.02351 & 0.02304 & 0.00047 \\
2019  & 0.01396 & 0.01122 & 0.00274 \\
2022  & 0.03499 & 0.02990 & 0.00510 \\
2025  & 0.03399 & 0.03175 & 0.00224 \\
2032  & 0.02858 & 0.02858 & 0.00000 \\
2033  & 0.02000 & 0.02000 & 0.00000 \\
2034  & 0.02000 & 0.02000 & 0.00000 \\
2040  & 0.05410 & 0.05410 & 0.00000 \\
4080  & 0.00000 & 0.00000 & 0.00000 \\
5736  & 0.00000 & 0.00000 & 0.00000 \\
5737  & 0.03846 & 0.03846 & 0.00000 \\
5738  & 0.02521 & 0.02268 & 0.00253 \\
5739  & 0.02000 & 0.02000 & 0.00000 \\
5742  & 0.05258 & 0.05063 & 0.00194 \\
5758  & 0.07790 & 0.07790 & 0.00000 \\
5761  & 0.14045 & 0.14045 & 0.00000 \\
5763  & 0.09363 & 0.09363 & 0.00000 \\
5766  & 0.06387 & 0.06387 & 0.00000 \\
5767  & 0.06735 & 0.06296 & 0.00440 \\
5769  & 0.09363 & 0.09363 & 0.00000 \\
5771  & 0.05363 & 0.05341 & 0.00022 \\
5774  & 0.03449 & 0.03288 & 0.00162 \\
5782  & 0.03846 & 0.03846 & 0.00000 \\
\textbf{5783}  & \textbf{0.04647} & \textbf{0.03700} & \textbf{0.00947} \\
5784  & 0.05375 & 0.05059 & 0.00316 \\
5785  & 0.06735 & 0.06296 & 0.00440 \\
5786  & 0.03568 & 0.02975 & 0.00593 \\
5787  & 0.04518 & 0.04286 & 0.00232 \\
5790  & 0.02000 & 0.02000 & 0.00000 \\
5801  & 0.06735 & 0.06296 & 0.00440 \\
5804  & 0.03870 & 0.03674 & 0.00196 \\
5805  & 0.04518 & 0.04286 & 0.00232 \\
5820  & 0.04491 & 0.04053 & 0.00438 \\
\textbf{5865}  & \textbf{0.04647} & \textbf{0.03700} & \textbf{0.00947} \\
6014  & 0.12571 & 0.12571 & 0.00000 \\
6030  & 0.00000 & 0.00000 & 0.00000 \\
6038  & 0.02000 & 0.02000 & 0.00000 \\
6040  & 0.04884 & 0.04358 & 0.00526 \\
6042  & 0.03836 & 0.03714 & 0.00122 \\
6060  & 0.03434 & 0.03279 & 0.00155 \\
6120  & 0.00000 & 0.00000 & 0.00000 \\
6375  & 0.10000 & 0.10000 & 0.00000 \\
6625  & 0.05363 & 0.05341 & 0.00022 \\
6627  & 0.04518 & 0.04286 & 0.00232 \\
\textbf{6630}  & \textbf{0.04647} & \textbf{0.03700} & \textbf{0.00947} \\
7128  & 0.00000 & 0.00000 & 0.00000 \\
7140  & 0.00000 & 0.00000 & 0.00000 \\
7905  & 0.10000 & 0.10000 & 0.00000 \\
15555 & 0.00000 & 0.00000 & 0.00000 \\
27030 & 0.00000 & 0.00000 & 0.00000 \\
\hline
\end{tabular}
\end{minipage}
\end{table}
\end{document}